\documentclass[12pt]{article}%
\usepackage{graphicx}
\usepackage{amsmath}
\usepackage{amsfonts}
\usepackage{amssymb}%
\setcounter{MaxMatrixCols}{30}
%TCIDATA{OutputFilter=latex2.dll}
%TCIDATA{Version=5.50.0.2953}
%TCIDATA{CSTFile=LaTeX article (bright).cst}
%TCIDATA{Created=Sunday, December 12, 2010 18:38:49}
%TCIDATA{LastRevised=Sunday, November 12, 2017 14:09:46}
%TCIDATA{<META NAME="GraphicsSave" CONTENT="32">}
%TCIDATA{<META NAME="SaveForMode" CONTENT="1">}
%TCIDATA{BibliographyScheme=Manual}
%TCIDATA{<META NAME="DocumentShell" CONTENT="Standard LaTeX\Senya">}
%TCIDATA{Language=American English}
%BeginMSIPreambleData
\providecommand{\U}[1]{\protect\rule{.1in}{.1in}}
%EndMSIPreambleData
\newtheorem{theorem}{Theorem}

\newtheorem{lemma}[theorem]{Lemma}

\newtheorem{proposition}[theorem]{Proposition}

\newenvironment{proof}[1][Proof]{\noindent\textbf{#1.} }{\ \rule{0.5em}{0.5em}}
\begin{document}

\title{\textbf{A continuum of pure states in the Ising model on a halfplane
\footnotetext{Part of this work has been carried out in the framework of the
Labex Archimede (ANR-11-LABX-0033) and of the A*MIDEX project
(ANR-11-IDEX-0001-02), funded by the \textquotedblleft Investissements
d'Avenir" French Government program managed by the French National Research
Agency (ANR). Part of this work has been carried out by SS at IITP RAS. The
support of Russian Foundation for Sciences (project No. 14-50-00150) is
gratefully acknowledged. The research of CMN was supported in part by US-NSF
grant DMS-1507019. The authors thank Alessio Squarcini for help with TeX
issues; {they also thank an anonymous referee for useful comments and
suggestions.} }}}
\author{Douglas Abraham$^{*}$, Charles M. Newman$^{\dag,\dag\dag}$, Senya
Shlosman$^{\natural,\sharp,\flat} $\\$^{*}$Rudolf Peierls Center for Theoretical Physics, \\University of Oxford, Oxford \\$^{\dag}$Courant Institute of Mathematical Sciences \\of New York University, New York; \\$^{\dag\dag}$NYU-ECNU Institute of Mathematical Sciences \\at NYU-Shanghai, Shanghai \\$^{\natural}$Skolkovo Institute of Science and Technology, Moscow; \\$^{\sharp}$Aix Marseille Univ, Universit\'{e} de Toulon, \\CNRS, CPT, Marseille; \\$^{\flat}$Inst. of the Information Transmission Problems,\\RAS, Moscow}
\maketitle

\begin{center}
\textit{We join with the other contributors to this special issue of JSP to}

\textit{acknowledge the many contributions to Mathematical }

\textit{and Statistical Physics made over the years by }

\textit{\textbf{Juerg Froehlich}, \textbf{Tom Spencer} and \textbf{Herbert
Spohn}.}

\textit{May they each live and be well to one hundred twenty.\pagebreak}
\end{center}

\begin{abstract}
We study the homogeneous nearest-neighbor Ising ferromagnet on the right half
plane with a Dobrushin type boundary condition --- say plus on the top part of
the boundary and minus on the bottom. For sufficiently low temperature $T$, we
completely characterize the pure (i.e., extremal) Gibbs states, as follows.
There is exactly one for each angle $\theta\in\lbrack-\pi/2,+\pi/2]$; here
$\theta$ specifies the asymptotic angle of the interface separating regions
where the spin configuration looks like that of the plus (respectively, minus)
full-plane state. Some of these conclusions are extended all the way to
$T=T_{c}$ by developing new Ising exact solution results -- in particular,
there is at least one pure state for each $\theta$.

\end{abstract}

\section{Introduction}

\label{introduction}

In equilibrium statistical mechanics and particularly in the study of phase
transitions, an important role is played by the collection of Gibbs states (or
measures or distributions) at temperature $T$ of a fully infinite system, such
as the standard (homogeneous nearest-neighbor) Ising ferromagnet on an
infinite lattice, such as ${\mathbb{Z}}^{d}$. The precise mathematical
formulation of such Gibbs states was pioneered by Dobrushin, Lanford and
Ruelle \cite{D1, LR}.

A nice discussion of the connection between the phenomenon of phase
transitions and nonuniqueness of infinite-volume Gibbs distributions may be
found in the Introduction section of \cite{G}. The pure phases of a physical
system correspond to the extremal (also called pure) Gibbs states --- the ones
which cannot be decomposed further as convex combinations of other states.
Thus for models such as the standard Ising ferromagnet on ${\mathbb{Z}}^{d}$,
considerable effort has been devoted to characterizing the collection of all
pure states.

One of the major accomplishments in the field is a complete characterization
for ${\mathbb{Z}}^{2}$ \cite{A, H} that there are exactly two pure states for
$T<T_{c}$ (corresponding respectively to plus and to minus boundary
conditions). Another major accomplishment was the demonstration by Dobrushin
\cite{D2} that for ${\mathbb{Z}}^{d}$ with $d\geq3$ at low enough $T$, there
are other, non-translation invariant, states which display an interface
(parallel to a coordinate axis). Together with two translation-invariant
states they form a countable set of extremal states. Other results concern the
analysis of pure states on infinite graphs such as homogeneous trees \cite{B,
BG, GRS, GMRS}.

The subject of this paper is one which has not been much investigated
previously --- namely, the analysis of the pure states when the underlying
lattice or graph is infinite, but with a nontrivial boundary. In particular,
we consider a half plane --- say, the right half-plane. Here one may specify a
boundary condition on the (left) boundary of the half-plane. In the absence of
boundary conditions (i.e., with free boundary conditions), the pure states
would mimic those of the full plane --- a single pure state for $T\geq T_{c}$
and exactly two for $T<T_{c}$.

If one instead specified a plus (or similarly, minus) left boundary, there
would be a unique pure state for all $T$. Instead we use a Dobrushin type
boundary condition --- plus on the top and minus on the bottom of the left
boundary. Our results prove that in this case, there are uncountably many pure
states -- exactly one for for each $\theta\in\lbrack-\pi/2,+\pi/2]$ where
$\theta$ is the asymptotic angle of an interface that starts at the origin
between plus and minus like regions of the half-plane. We think these are all
the pure states of our model, though we prove this only for temperatures low
enough. Similar results were formulated earlier by one of us in \cite{DS} and
called the `Meniscus theorem', but without a proof.

%\textbf{We believe that the situation in higher dimensions is similar;
%moreover, the number of continuous parameters which index the half-space
%states of the Ising model is bigger than }$1.$

We believe that the situation in higher dimensions~$d$ is somewhat similar but
with possible new phenomena such as needing at least $d-2>1$ continuous
variables to parameterize the half-space pure states. For example in the $d=3$
half-space, $\{(s,t,u):s>0\}$ one may take a boundary condition in the
$(t,u))$-plane consisting of four alternating plus and minus regions --- say,
for simplicity, the four quadrants separated by the two interface lines along
the $t$ and $u$ axes. Then there could be a family of mutually singular Gibbs
states, each with two approximately planar interfaces, emanating from the two
boundary interface lines. It should take two continuous angular variables, say
$\theta_{1}$ and $\theta_{2}$, to parameterize this family, each giving the
asympotic angular deviation of one of the two interfaces from the
corresponding coordinate plane (the $(t,s)$ or the $(u,s)$ plane).

Other models with uncountably many pure states can be constructed on
homogeneous trees that have as much surface as bulk volume (in the
thermodynamic limit), see, e.g., \cite{GMRS}. Examples of a different nature
are provided by models with quasiperiodic order, as considered, e.g., in
\cite{vEM,vEMZ}, and stacked models \cite{WF}. We also note that models in
half-spaces were studied earlier --- see, e.g., \cite{Ba, S}.

The remaining sections of the paper are organized as follows. In
Section~\ref{definition}, we give a precise definition of our halfplane Ising
model and in Section~\ref{theorems} the main results are stated. Proofs for
the simplest case of ground states where $T=0$ are given in
Section~\ref{zerotemp} while the proofs for temperature $T>0$ but small enough
for cluster expansions to apply are presented in Section~\ref{lowtemp}.
Finally, the analysis valid for all $T<T_{c}$ is given in
Sections~\ref{theorems2} and~\ref{belowcritical} of the paper; there we show
that there is at least one pure state with an interface at angle $\theta$ for
each $\theta\in\lbrack-\pi/2,+\pi/2].$ We have no proof that these are all the
pure states.

\section{Definition of the model}

\label{definition}

We consider the Ising model, defined by the usual Ising Hamiltonian
\begin{equation}
H\left(  \sigma\right)  =-\sum_{x,y\text{ n.n. }}\sigma_{x}\sigma
_{y},\label{01}%
\end{equation}
%In this paper we consider it
but on $\mathbb{Z}_{+}^{2\ast}=\left\{  x=\left(  s,t\right)  :s,t\in
\mathbb{Z}^{1}+\frac{1}{2},s>0\right\}  $, the halfplane (of the dual
lattice). We will be interested in a specific boundary condition -- which in
our case is a configuration $\sigma^{\pm}$ on $\partial\mathbb{Z}_{+}%
^{2}\equiv\left\{  \left(  -\frac{1}{2},t\right)  :t\in\mathbb{Z}^{1}+\frac
{1}{2}\right\}  $ -- defined by
\[
\sigma_{\left(  -\frac{1}{2},t\right)  }^{\pm}=\left\{
\begin{array}
[c]{cc}%
+1 & \text{if }t>0\\
-1 & \text{if }t<0
\end{array}
\right.  .
\]
The corresponding relative Hamiltonian $H\left(  \sigma{\Huge |}\sigma^{\pm
}\right)  $ is given, accordingly, by%
\[
H\left(  \sigma{\Huge |}\sigma^{\pm}\right)  =-\sum_{\substack{x,y\in
\mathbb{Z}_{+}^{2}\text{ n.n.}\\\text{ }}}\sigma_{x}\sigma_{y}-\sum
_{t>0}\sigma_{\left(  -\frac{1}{2},t\right)  }+\sum_{t<0}\sigma_{\left(
-\frac{1}{2},t\right)  }.
\]
We will also consider the strips%
\[
\mathbb{Z}_{N}^{2}=\left\{  x=\left(  s,t\right)  :s,t\in\mathbb{Z}^{1}%
+\frac{1}{2},0<s<N\right\}  ,
\]
and we will need different boundary conditions on their right boundaries. It
is convenient to view the boundary condition as a configuration on all the
lattice $\mathbb{Z}_{+}^{2},$ and we will use the following family
$\sigma^{\theta}$:%
\[
\sigma_{\left(  s,t\right)  }^{\theta}=\left\{
\begin{array}
[c]{cc}%
+1 & \text{if }\frac{t}{s}\geq\tan\theta\\
-1 & \text{if }\frac{t}{s}<\tan\theta
\end{array}
\right.  ,~\theta\in\left[  -\frac{\pi}{2},\frac{\pi}{2}\right]  .
\]
We denote by $\left\langle \ast\right\rangle _{N}^{\theta}$ the Gibbs state in
the strip $\mathbb{Z}_{N}^{2}$ with the boundary condition $\sigma^{\pm}$ on
its left edge and with the boundary condition $\sigma^{\theta}$ on its right edge.

\section{Main theorems for low temperature}

\label{theorems}

\begin{theorem}
Let the temperature $T=\beta^{-1}$ be low enough. Then for every $\theta
\in\left(  -\frac{\pi}{2},\frac{\pi}{2}\right)  $ there exists a Gibbs state
$\left\langle \ast\right\rangle ^{\theta}$ of the model $\left(
\ref{01}\right)  $ at the temperature $T,$ such that
\begin{equation}
\left\langle \sigma_{\left(  s,t\right)  }\right\rangle ^{\theta}%
\rightarrow\left\{
\begin{array}
[c]{c}%
+m^{\ast}\left(  \beta\right)  \\
-m^{\ast}\left(  \beta\right)
\end{array}
\right\}  \text{ when }s,t\rightarrow\infty\text{ s.t. }\left\{
\begin{array}
[c]{c}%
\lim\inf\\
\lim\sup
\end{array}
\right\}  \left(  \frac{t}{s}\right)  \left\{
\begin{array}
[c]{c}%
>\\
<
\end{array}
\right\}  \tan\theta.\label{1.5}%
\end{equation}
It follows that the states $\left\langle \ast\right\rangle ^{\theta}$ with
different $\theta$'s are mutually singular.
\end{theorem}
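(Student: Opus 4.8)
The plan is to construct $\langle\ast\rangle^{\theta}$ as the infinite-strip limit of the states $\langle\ast\rangle_{N}^{\theta}$ and to read off (\ref{1.5}) from uniform-in-$N$, low-temperature control of a single tilted interface. First I would invoke compactness: since $\{-1,+1\}^{\mathbb{Z}_{+}^{2\ast}}$ is compact in the product topology, the states $\langle\ast\rangle_{N}^{\theta}$ have a weak-$\ast$ limit point $\langle\ast\rangle^{\theta}$, and because the DLR equations are local they pass to the limit, so $\langle\ast\rangle^{\theta}$ is a Gibbs state for the Hamiltonian (\ref{01}). In fact the cluster expansion used below writes every local correlation $\langle\sigma_{x_{1}}\cdots\sigma_{x_{k}}\rangle_{N}^{\theta}$ as an absolutely convergent series whose terms stabilize once $N$ exceeds a function of the $x_{i}$, so the full limit in $N$ exists; hence it suffices to prove (\ref{1.5}) for the finite-$N$ states uniformly in $N$.

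Next I would pass to the contour (dual) representation. With $\sigma^{\pm}$ on the left edge and $\sigma^{\theta}$ on the right edge, the boundary spin changes sign exactly twice --- at the origin on the left and at the crossover point $b_{N}\approx(N,N\tan\theta)$ on the right. The Dobrushin mechanism then forces exactly one open contour $\gamma$, the \emph{interface}, joining a neighborhood of the origin to a neighborhood of $b_{N}$, with all remaining contours closed; $\gamma$ splits the strip into an upper region inheriting the $+$ boundary and a lower region inheriting the $-$ boundary.

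The core step is to show that at low $T$ this interface is localized around the straight segment of slope $\tan\theta$ through the origin, with only diffusive transverse fluctuations, uniformly in $N$. Following the Dobrushin--Koteck\'{y}--Shlosman treatment of the planar Ising interface, the cluster expansion rewrites the weight of $\gamma$ as that of an effective directed random walk whose increments have mean direction $\theta$ and whose rate function is the surface tension, strictly convex at low $T$; strict convexity yields transverse fluctuations of order $\sqrt{s}$ at horizontal coordinate $s$ together with a Gaussian bound for large excursions. Consequently, for a site $x=(s,t)$ with $t-s\tan\theta\geq\varepsilon s$ --- which is exactly what $\liminf(t/s)>\tan\theta$ forces for all large $x$ --- the probability that $\gamma$ rises above $x$ is at most $\exp(-c(t-s\tan\theta)^{2}/s)\leq\exp(-c'\varepsilon^{2}s)$. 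On the complementary event $\gamma$ lies below $x$, typically at vertical distance of order $s$, and conditioning on $\gamma$ leaves near $x$ a convergent cluster-expansion perturbation of the $+$ full-plane Ising state; hence $\langle\sigma_{x}\rangle_{N}^{\theta}=+m^{\ast}(\beta)+o(1)$ as $x\to\infty$ with $\liminf(t/s)>\tan\theta$, uniformly in $N$. The symmetric estimate for $\limsup(t/s)<\tan\theta$ gives $-m^{\ast}(\beta)$, and letting $N\to\infty$ yields (\ref{1.5}).

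The hard part will be the tilted interface itself: for general $\theta$ the reference segment is not a lattice line, so the effective-walk representation must be set up relative to a slope realized only on average, and one must extract positivity of the stiffness (strict convexity of the surface tension in the direction $\theta$) from the convergent expansion --- this is where low temperature is essential and where the estimates are heaviest. Granting (\ref{1.5}), the final assertion is immediate: for $\theta_{1}<\theta_{2}$ I would pick $\phi\in(\theta_{1},\theta_{2})$ and average $\sigma_{x}$ over sites in a thickened ray at angle $\phi$; by (\ref{1.5}) this spatial average tends to $+m^{\ast}(\beta)$ under $\langle\ast\rangle^{\theta_{1}}$ and to $-m^{\ast}(\beta)$ under $\langle\ast\rangle^{\theta_{2}}$, while exponential decay of truncated correlations at low $T$ forces the average to concentrate, so the event that this limiting average equals $+m^{\ast}(\beta)$ has probability one under the first state and zero under the second. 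The two states are therefore mutually singular.
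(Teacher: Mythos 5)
You follow the same overall route as the paper --- strip states $\left\langle \ast\right\rangle _{N}^{\theta}$, the contour representation with a single open interface, DKS-type concentration of that interface around the tilted segment, and then (\ref{1.5}) and mutual singularity --- but you transplant the Dobrushin--Koteck\'{y}--Shlosman machinery as if the half-plane geometry were innocuous, and that is precisely the point the paper insists is \emph{not} routine. In the weight (\ref{10}) the cluster sum runs only over $C\subset V_{N}$, so a fragment of the interface near the boundary $\partial V_{N}$ lives in a different potential landscape than in the bulk: shifting a fragment toward the wall deletes contributions $\Phi(C)$, and when $\Phi(C)<0$ this produces an effective \emph{attraction} of the interface to the wall. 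Since the interface is pinned to the wall at the origin (and at its right endpoint), your claimed Gaussian transverse fluctuations and the resulting cigar concentration are not a direct consequence of the DKS random-walk picture; one needs the result of Ioffe--Shlosman--Toninelli that entropic repulsion beats any wall attraction of strength $\exp\left\{ -\chi\beta(\mathrm{diam}_{\infty}(C)+1)\right\} $ with $\chi>\frac{1}{2}$ (which (\ref{07}) supplies with $\chi=2$), together with the correction in \cite{IST} of an error in DKS Theorem~4.16, which is exactly what the paper cites for the statement that $\gamma\subset U_{N,d,\varkappa,R}$ with probability close to $1$ uniformly in $N$. Relatedly, you locate the ``hard part'' in the tilted reference line and the positivity of the stiffness, but DKS already handles arbitrary orientations (analyticity and strict convexity of $\tau_{\beta}(\theta)$ at low $T$); the genuinely new difficulty in this model is the boundary interaction just described, which your proposal never mentions.

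A second, lesser problem: your claim that the cluster expansion makes every local correlation stabilize once $N$ exceeds a function of the observation points, so that the full limit in $N$ exists, is wrong as stated. The local law of the pinned interface near the origin depends on $N$ through ratios of interface partition functions, and these do not converge term by term; the paper controls them via boundedness of the cross-ratio (\ref{13}), which rests on the sharp asymptotics (\ref{14}) --- including the universal Ornstein--Zernike correction $-\frac{1}{2}\ln N$ --- and analyticity of $\tau_{\beta}$ in $\theta$ (this is the content of Theorem~\ref{t3}). For the present theorem the damage is limited, since the statement only asserts existence of \emph{some} Gibbs state with property (\ref{1.5}), so your compactness argument plus uniform-in-$N$ estimates would suffice, and your deduction of mutual singularity from (\ref{1.5}) via ray averages with a second-moment concentration bound is a reasonable filling-in of the paper's ``it follows''. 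But as a proof of the theorem your text has a real gap at the wall-attraction step, and the stabilization claim should be replaced by either a subsequential-limit argument or the paper's cross-ratio estimate.
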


The cases $\theta=\pm\frac{\pi}{2}$ require obvious modifications, as given in
the next theorem.

\begin{theorem}
\label{t2} Let the temperature $T=\beta^{-1}$ be low enough. Then for
$\theta=\pm\frac{\pi}{2}$ there exist two Gibbs states $\left\langle
\ast\right\rangle ^{\pm\frac{\pi}{2}}$ of the model $\left(  \ref{01}\right)
$ at temperature $T,$ such that for any $C\in\left(  -\infty,+\infty\right)
$
\begin{equation}
\left\langle \sigma_{\left(  s,t\right)  }\right\rangle ^{\pm\frac{\pi}{2}%
}\rightarrow\mp m^{\ast}\left(  \beta\right)  \text{ for }s,t\rightarrow
\infty\text{ s.t. }\frac{t}{s}\rightarrow C.\label{17}%
\end{equation}
At the same time, there exists a function $t_{0}\left(  s\right)
\rightarrow+\infty$ as $s\rightarrow\infty,$ s.t. for any function $t\left(
s\right)  \geq t_{0}\left(  s\right)  $%
\begin{equation}
\left\langle \sigma_{\left(  s,\pm t\left(  s\right)  \right)  }\right\rangle
^{\pm\frac{\pi}{2}}\rightarrow\pm m^{\ast}\left(  \beta\right)  \text{ as
}s\rightarrow\infty.\label{18}%
\end{equation}
In fact, $t_{0}\left(  s\right)  $ can be any function
%$t_{0}\left(  s\right)  ,$
such that $\frac{t_{0}\left(  s\right)  }{s^{2}}\rightarrow\infty$ as
$s\rightarrow\infty$.
%would go.

\end{theorem}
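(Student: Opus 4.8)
The plan is to treat the case $\theta=+\frac{\pi}{2}$ in detail; the case $\theta=-\frac{\pi}{2}$ is obtained from it by the global spin flip $\sigma\mapsto-\sigma$. Since $\tan\frac{\pi}{2}=+\infty$, the boundary configuration degenerates to $\sigma^{+\pi/2}\equiv-1$, so $\langle\ast\rangle^{+\pi/2}$ is the $N\to\infty$ limit of the strip states carrying the Dobrushin data $\sigma^{\pm}$ on the left edge and all-minus on the right edge; existence of this limit as a Gibbs state follows from the same monotonicity and cluster-expansion arguments that underlie Theorem 1. First I would invoke the low-temperature contour (Peierls) representation: the only source of a long contour is the flip of the boundary condition at the origin, so there is a single open interface $\Gamma$ issuing from the origin. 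Because the entire right edge and the lower left edge are minus while the upper left edge is plus, $\Gamma$ separates a plus region that hugs the left boundary for $t>0$ from the minus bulk, and runs upward toward $+\infty$. I parametrize $\Gamma$ by height and write $X(t)$ for its horizontal position at height $t$, with $X(0)=0$ and $X(t)>0$ for $t>0$, so that the plus side at height $t$ is exactly $\{s<X(t)\}$.

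Next I would reduce the local magnetization to the position of $\Gamma$. Conditionally on $\Gamma$, the cluster expansion shows that $\sigma_{(s,t)}$ has conditional expectation $+m^{\ast}(\beta)$ up to corrections exponentially small in the distance to $\Gamma$ when $(s,t)$ lies on the plus side, and $-m^{\ast}(\beta)$ when it lies on the minus side. Averaging over $\Gamma$ then gives
\[
\langle\sigma_{(s,t)}\rangle^{+\pi/2}=m^{\ast}(\beta)\bigl[\,2\,\mathbb{P}(X(t)>s)-1\,\bigr]+o(1),
\]
so that the whole theorem is reduced to estimating $\mathbb{P}(X(t)>s)$.

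The heart of the matter is the diffusive behavior of $\Gamma$. Via the cluster expansion the law of $\Gamma$ is a small perturbation of a directed random walk; for the degenerate boundary $\theta=\frac{\pi}{2}$ the effective horizontal drift vanishes (the interface is asymptotically vertical, corresponding to slope $\cot\theta\to0$), while the plus boundary on the left acts as a repelling wall keeping $X(t)>0$. I would prove that $X(t)$ is consequently of order $\sqrt{t}$, in the two-sided sense that $\mathbb{P}(X(t)\geq\lambda\sqrt{t})\to0$ as $\lambda\to\infty$ and $\mathbb{P}(X(t)\leq\varepsilon\sqrt{t})\to0$ as $\varepsilon\to0$, both uniformly in $t$. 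The upper tail is a large-deviation statement, namely that the interface almost never makes a horizontal excursion of ballistic size; the lower tail is an entropic-repulsion estimate, expressing that a zero-drift walk reflected at a wall has genuinely moved out a distance $\asymp\sqrt{t}$. I expect the lower bound to be the main obstacle, since it requires true fluctuation (diffusivity) rather than mere boundedness and must be made uniform in $t$.

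With these estimates the conclusions are immediate. For $(\ref{17})$, the hypothesis $t/s\to C\in(-\infty,+\infty)$ forces $s/\sqrt{t}\to\infty$, whence $\mathbb{P}(X(t)>s)\to0$ and $\langle\sigma_{(s,t)}\rangle^{+\pi/2}\to-m^{\ast}(\beta)$. For $(\ref{18})$, the hypothesis $t(s)\geq t_{0}(s)$ with $t_{0}(s)/s^{2}\to\infty$ forces $s/\sqrt{t(s)}\to0$, whence $\mathbb{P}(X(t)>s)\to1$ and $\langle\sigma_{(s,t(s))}\rangle^{+\pi/2}\to+m^{\ast}(\beta)$. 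The transition occurs exactly at the parabola $t\asymp s^{2}$, i.e. $s\asymp\sqrt{t}$, which is the typical location of $\Gamma$ — the meniscus — and this is precisely why the sharp condition on $t_{0}$ is $t_{0}(s)/s^{2}\to\infty$.
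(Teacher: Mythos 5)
Your reduction --- a single open contour $\gamma_1$ issuing from the origin, conditional magnetization $\pm m^{\ast}(\beta)$ up to corrections exponentially small in the distance to the contour, so that everything rests on locating $\gamma_1$ --- is exactly the strategy the paper follows, and your identification of the parabola $t\asymp s^{2}$ with the diffusive scale of the interface is the correct explanation of the hypothesis on $t_{0}$. For (\ref{17}) the ballistic upper-tail bound you need is indeed available at low temperature; the paper obtains it from the cigar-confinement estimate of \cite{DKS}, the same estimate it uses to show that the states $\left\langle \ast\right\rangle ^{\theta}$ with different $\theta$ are distinct.

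The genuine gap is in the lower-tail estimate $\mathbb{P}\left(X(t)\le\varepsilon\sqrt{t}\right)\to0$, which you state (and correctly flag as the main obstacle) but do not prove, and whose proposed derivation is based on a misleading picture. You model $\Gamma$ as a small perturbation of a directed random walk above a purely \emph{repelling} (hard) wall. But in the half plane the weight (\ref{10}) of a contour near $\partial V_{N}$ is not the bulk weight: clusters $C$ that would protrude beyond the wall are missing from the sum, and when the corresponding $\Phi(C)<0$ their absence acts as an effective \emph{attraction} of $\gamma_{1}$ to $Oy$ --- a pinning force, not a repulsion. Whether entropic repulsion defeats this attraction is precisely the delicate point: by the result of \cite{IST} quoted in the paper, repulsion wins when the attraction is bounded by $\exp\left\{ -\chi\beta(\mathrm{diam}_{\infty}(C)+1)\right\}$ with $\chi>\frac{1}{2}$ (which (\ref{07}) guarantees), but pinning can win for $\chi<\frac{1}{2}$. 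A naive reflected-random-walk argument is therefore insufficient; some input of the \cite{IST} type is indispensable, and without it your proof of (\ref{18}) does not close. It is worth noting that the paper itself proves only a weakened version of the repulsion --- the Proposition that under $\left\langle \ast\right\rangle _{N}^{+\frac{\pi}{2}}$ the expected fraction of heights at which $\gamma_{1}$ is farther than any fixed $h$ from $Oy$ tends to one, obtained by adding a wall attraction of strength $e^{-\beta(h+1)}$ to the free energy and checking via \cite{IST} that the surface tension is unchanged --- and defers the sharp parabolic form of $t_{0}$ to the methods of \cite{IST}.
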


\begin{theorem}
\label{t3} Let the temperature $T=\beta^{-1}$ be low enough. For $\theta
\in\left[  -\frac{\pi}{2},\frac{\pi}{2}\right]  $ the state $\left\langle
\ast\right\rangle ^{\theta}$ can be obtained as a limit,%
\[
\left\langle \ast\right\rangle ^{\theta}=\lim_{N\rightarrow\infty}\left\langle
\ast\right\rangle _{N}^{\theta_{N}},
\]
where $\theta_{N}$ is any sequence of angles satisfying the condition:
$\lim_{N\rightarrow\infty}\theta_{N}=\theta.$
\end{theorem}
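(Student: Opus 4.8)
The plan is to obtain Theorem~\ref{t3} by sandwiching the states $\left\langle\ast\right\rangle_{N}^{\theta_{N}}$ between constant-angle strip states and then exploiting monotonicity in the boundary angle. First I would record the key monotonicity: the right boundary configuration $\sigma^{\alpha}$ is pointwise nonincreasing in $\alpha$, since lowering $\alpha$ enlarges the plus region $\{t/s\geq\tan\alpha\}$. By the FKG inequality the strip states are then stochastically ordered, $\left\langle f\right\rangle_{N}^{\alpha'}\leq\left\langle f\right\rangle_{N}^{\alpha}$ for $\alpha\leq\alpha'$ and every bounded increasing local $f$, and this ordering passes to the limits $\left\langle\ast\right\rangle^{\alpha}$ furnished by the first theorem and Theorem~\ref{t2}, which the cluster-expansion construction of Section~\ref{lowtemp} realizes as $\left\langle\ast\right\rangle^{\alpha}=\lim_{N}\left\langle\ast\right\rangle_{N}^{\alpha}$. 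Given $\theta_{N}\to\theta$ and $\epsilon>0$, all $\theta_{N}$ with $N$ large lie in $[\theta-\epsilon,\theta+\epsilon]$, so monotonicity yields, for increasing local $f$,
\[
\left\langle f\right\rangle_{N}^{\theta+\epsilon}\ \leq\ \left\langle f\right\rangle_{N}^{\theta_{N}}\ \leq\ \left\langle f\right\rangle_{N}^{\theta-\epsilon}.
\]
Letting $N\to\infty$ and then $\epsilon\downarrow0$ reduces everything to showing that the one-sided limits $\left\langle\ast\right\rangle^{\theta-}:=\lim_{\alpha\uparrow\theta}\left\langle\ast\right\rangle^{\alpha}$ and $\left\langle\ast\right\rangle^{\theta+}:=\lim_{\alpha\downarrow\theta}\left\langle\ast\right\rangle^{\alpha}$, which exist by monotonicity and compactness of the spin space, both coincide with $\left\langle\ast\right\rangle^{\theta}$. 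At the endpoints $\theta=\pm\frac{\pi}{2}$ only the relevant one-sided bound is available, but the same argument applies verbatim.

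Next I would reduce the equality $\left\langle\ast\right\rangle^{\theta-}=\left\langle\ast\right\rangle^{\theta+}$ to a statement about single spins. Since $\left\langle\ast\right\rangle^{\theta+}\preceq\left\langle\ast\right\rangle^{\theta}\preceq\left\langle\ast\right\rangle^{\theta-}$, Strassen's theorem gives a monotone coupling of $\left\langle\ast\right\rangle^{\theta+}$ and $\left\langle\ast\right\rangle^{\theta-}$ in which the first configuration lies pointwise below the second; if in addition $\left\langle\sigma_{x}\right\rangle^{\theta+}=\left\langle\sigma_{x}\right\rangle^{\theta-}$ for every site $x$, then each coordinate of the coupled pair has nonnegative, mean-zero difference, hence agrees almost surely, forcing $\left\langle\ast\right\rangle^{\theta+}=\left\langle\ast\right\rangle^{\theta-}$ and so both equal to $\left\langle\ast\right\rangle^{\theta}$. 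It therefore suffices to prove that for each fixed $x$ the limiting magnetization $\alpha\mapsto\left\langle\sigma_{x}\right\rangle^{\alpha}$ is continuous at $\alpha=\theta$.

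Establishing this continuity is the heart of the matter and the step I expect to be the main obstacle. The tempting shortcut---bounding the variation of $\alpha\mapsto\left\langle\sigma_{x}\right\rangle_{N}^{\alpha}$ by summing the exponentially small influences of the $O(N\epsilon)$ boundary spins that flip as $\alpha$ ranges over $[\theta-\epsilon,\theta+\epsilon]$---is fatally misleading, because those spins sit near the interface endpoint and their influence on $\sigma_{x}$ is carried by the interface rather than decaying with distance; the same shortcut would wrongly predict $\left\langle\sigma_{x}\right\rangle^{\alpha}$ to be locally constant, contradicting the asymptotics (\ref{1.5}). Instead I would use the Dobrushin interface expansion underlying the first theorem: conditioned on its endpoints $0$ and $\approx(N,N\tan\alpha)$, the plus/minus interface is a directed path whose law is an exponentially weighted random-walk bridge with cluster-expansion-controlled corrections, and $\left\langle\sigma_{x}\right\rangle_{N}^{\alpha}$ is, up to exponentially small terms, determined by the law of this interface in a bounded window around $x$. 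The parameter $\alpha$ enters only through the tilt $\tan\alpha$ of the bridge, which after the standard passage to a tilted step distribution becomes an analytic parameter; the window-law of the interface converges as $N\to\infty$ to a limit depending continuously on $\tan\alpha$, whence $\left\langle\sigma_{x}\right\rangle^{\alpha}$ is continuous in $\alpha$, in particular at $\theta$. Carrying out this last convergence uniformly enough to exchange the $N\to\infty$ limit with continuity in the slope is where the real work lies; the monotonicity and Strassen steps are then purely formal.
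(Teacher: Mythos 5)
Your reduction is clean and genuinely different from the paper's argument: the pointwise monotonicity of $\sigma^{\alpha}$ in $\alpha$, the FKG sandwich $\langle f\rangle_{N}^{\theta+\epsilon}\leq\langle f\rangle_{N}^{\theta_{N}}\leq\langle f\rangle_{N}^{\theta-\epsilon}$ for increasing local $f$, and the Strassen step (two stochastically ordered measures with equal one-point functions coincide) are all correct, and they validly reduce the theorem to the continuity of $\alpha\mapsto\langle\sigma_{x}\rangle^{\alpha}$ at the given $\theta$. The trouble is that, under your reduction, this continuity \emph{is} the entire content of the theorem, and you do not prove it. Monotonicity only guarantees that $\alpha\mapsto\langle\sigma_{x}\rangle^{\alpha}$ is continuous off a countable set of angles, which is useless here since the statement concerns every $\theta$; and your substitute --- ``the window-law of the interface converges as $N\to\infty$ to a limit depending continuously on $\tan\alpha$'' --- is precisely the assertion that needs establishing, as you yourself concede (``where the real work lies''). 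So the proposal is a correct formal reduction plus an unproven key lemma: a genuine gap at its core. Note moreover that what your route demands is a \emph{local} statement (convergence of the conditional law of the interface in a fixed window, uniformly in the tilt), which is strictly more delicate than free-energy asymptotics.

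The paper closes this step with weaker input. It fixes an initial segment $\eta$ of the interface from $\left(0,0\right)$ to a point of $\partial V_{n}$ and compares its probabilities under the two strip states through the cross-ratio
\[
\frac{Z\left(  \theta^{\prime},N,\eta\right)  }{Z\left(  \theta^{\prime
},N\right)  }\,\frac{Z\left(  \theta,N\right)  }{Z\left(  \theta
,N,\eta\right)  },
\]
where $Z\left(\cdot,N\right)$ and $Z\left(\cdot,N,\eta\right)$ are the cluster-expansion partition functions over interfaces without and with the prescribed beginning $\eta$. By the DKS asymptotics
\[
\ln Z\left(  \theta,N\right)  =-\beta\tau_{\beta}\left(  \theta\right)
\frac{N}{\cos\theta}-\frac{1}{2}\ln\frac{N}{\cos\theta}+O\left(  1\right)
\]
together with the analyticity of $\tau_{\beta}$ in $\theta$, the terms growing with $N$ cancel inside the cross-ratio, so it is bounded above and below uniformly in $N$ (not uniformly in $\eta$, which suffices because $\eta$ ranges over a fixed finite window). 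This gives mutual absolute continuity of the subsequence limits of $\left\langle\ast\right\rangle_{N}^{\theta_{N}}$ with $\left\langle\ast\right\rangle^{\theta}$, which identifies the limit, since states attached to different angles are mutually singular. The point of this formulation is that only partition-function asymptotics (surface tension plus the Ornstein--Zernike $-\frac{1}{2}\ln N$ correction, with $O\left(1\right)$ error) are needed, never the local window-law convergence your sketch calls for. To complete your proof you would either have to carry out that local Ornstein--Zernike-type analysis, or replace your final step by the paper's cross-ratio argument.
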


\begin{theorem}
Let the temperature $T=\beta^{-1}$ be low enough, and $\mu$ be any half-plane
Gibbs state with boundary condition $\sigma^{\pm}.$ Then there is a (unique)
probability measure $d_{\mu}$ on the segment $\left[  -\frac{\pi}{2},\frac
{\pi}{2}\right]  $ such that%
\[
\mu=\int\left\langle \ast\right\rangle ^{\theta}d_{\mu}\left(  \theta\right)
.
\]
This implies that the family $\left\{  \left\langle \ast\right\rangle
^{\theta},\theta\in\left[  -\frac{\pi}{2},\frac{\pi}{2}\right]  \right\}  $ of
states coincides with the family of all extremal Gibbs states of the
half-plane Ising model with boundary condition $\sigma^{\pm}.$
\end{theorem}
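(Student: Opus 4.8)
The plan is to deduce the whole statement from a single structural fact: for any half-plane Gibbs state $\mu$ with boundary condition $\sigma^{\pm}$, the only tail information carried by $\mu$ is the asymptotic angle of the plus/minus interface. Concretely, at low temperature a configuration drawn from $\mu$ contains (almost surely) a unique infinite contour $\gamma$ emanating from the origin, and the sublinearity of its fluctuations lets me define a random variable $\Theta\in[-\pi/2,\pi/2]$, the asymptotic angle of $\gamma$. Since $\Theta$ depends only on the behaviour of $\gamma$ at infinity, it is measurable with respect to the tail $\sigma$-algebra. I would then condition on $\Theta$: writing $d_{\mu}$ for the law of $\Theta$ under $\mu$ and $\mu(\cdot\mid\Theta=\theta)$ for the regular conditional distribution, the standard fact that conditioning a Gibbs measure on the tail field again yields (a.s.) a Gibbs measure with the same specification and boundary condition shows that each $\mu(\cdot\mid\Theta=\theta)$ is itself a half-plane Gibbs state with boundary condition $\sigma^{\pm}$, and by construction $\Theta=\theta$ holds $\mu(\cdot\mid\Theta=\theta)$-almost surely.

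Everything then rests on the following uniqueness claim, which is the crux of the argument: there is exactly one Gibbs state with boundary condition $\sigma^{\pm}$ whose interface has asymptotic angle $\theta$ almost surely, namely $\langle\ast\rangle^{\theta}$ (that it is such a state follows from the magnetization profile in Theorem 1). To prove the claim I would first reduce a state to the law of its interface. Conditioning the joint law of spins and interface on the full trajectory of $\gamma$ splits the complement into a plus region and a minus region, each a large domain carrying a pure boundary condition; at low temperature the cluster expansion identifies the conditional spin law in each region with the restriction of the corresponding full-plane pure phase, independently of the state we started from. Hence a Gibbs state with boundary condition $\sigma^{\pm}$ is completely determined by the law it assigns to $\gamma$, and the claim reduces to uniqueness of the law of an infinite interface from the origin conditioned to have asymptotic angle $\theta$.

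The hard part is this last uniqueness statement for the interface law. Via the low-temperature contour expansion the interface $\gamma$ is an effectively one-dimensional Gibbsian object: its increments interact through an exponentially decaying, translation-covariant effective potential, and the free energy per unit length of a tilted interface is the surface tension, which is strictly convex in the regime under consideration. Fixing the asymptotic angle $\theta$ forces mean slope $\tan\theta$, which by strict convexity selects a unique conjugate tilt; the tilted effective interface model is then a one-dimensional system with summable interactions, for which the Gibbs state is unique. This pins the local statistics of $\gamma$, hence the entire state. I expect the control of this effective one-dimensional model --- extracting the decaying effective potential from the cluster expansion and running the slope-to-tilt selection uniformly --- to be the main technical obstacle; Theorem~\ref{t3}, which realizes $\langle\ast\rangle^{\theta}$ as a limit of finite-volume states for \emph{any} sequence $\theta_{N}\to\theta$, should serve as the key robustness input, since it shows the limiting state is insensitive to the approximating boundary tilts.

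With the claim in hand the theorem follows. The conditional states satisfy $\mu(\cdot\mid\Theta=\theta)=\langle\ast\rangle^{\theta}$ for $d_{\mu}$-a.e.\ $\theta$, so integrating gives $\mu=\int\langle\ast\rangle^{\theta}\,d_{\mu}(\theta)$. Uniqueness of $d_{\mu}$ is immediate because the states $\langle\ast\rangle^{\theta}$ are mutually singular (Theorem 1) and are separated by the tail-measurable $\Theta$, forcing $d_{\mu}$ to coincide with the law of $\Theta$. Finally, applying the decomposition to $\mu=\langle\ast\rangle^{\theta_{0}}$ and invoking uniqueness gives $d_{\langle\ast\rangle^{\theta_{0}}}=\delta_{\theta_{0}}$, so each $\langle\ast\rangle^{\theta_{0}}$ is extremal; and applying it to an arbitrary extremal $\mu$ forces $d_{\mu}$ to be a point mass, whence $\mu=\langle\ast\rangle^{\theta_{0}}$ for some $\theta_{0}$. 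Thus $\{\langle\ast\rangle^{\theta}:\theta\in[-\pi/2,\pi/2]\}$ is precisely the set of extremal Gibbs states with boundary condition $\sigma^{\pm}$.
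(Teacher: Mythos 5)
Your architecture is genuinely different from the paper's, but it contains a gap that is fatal as written: the very first step assumes that under an \emph{arbitrary} Gibbs state $\mu$ with boundary condition $\sigma^{\pm}$, the infinite contour through the origin has sublinear fluctuations around a ray, so that the asymptotic angle $\Theta$ is defined $\mu$-almost surely. That is not a structural fact available in advance --- it is essentially the theorem being proved. What general theory gives you for free is much weaker: reducing to extremal $\mu$ via the abstract extremal decomposition, tail triviality makes $\limsup t/s$ and $\liminf t/s$ along the contour a.s.\ constant, say $c_{+}\geq c_{-}$, but nothing a priori rules out $c_{+}>c_{-}$, i.e.\ an interface whose slope oscillates forever. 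The confinement estimates that would exclude such oscillation (the cigar-shaped sets $U_{N,d,\varkappa,R}$, the profile asymptotics) are established in the paper only for the specific states $\langle\ast\rangle_{N}^{\theta}$ and their limits $\langle\ast\rangle^{\theta}$, not for an arbitrary $\mu$; invoking them for $\mu$ is circular. The paper's proof is engineered precisely to avoid this issue: it never speaks of an asymptotic angle of $\mu$. Instead it uses the DLR equations to write the restriction of $\mu$ to a window as a mixture, over the $\mu$-random boundary spins on $\partial V_{N}$, of finite-volume states; each boundary condition is encoded by its defect points $z_{1}=(0,0),z_{2},\dots,z_{2k}$; and the Peierls-surgery Lemma of Section~\ref{lowtemp}, combined with the geometric analysis of Section~\ref{zerotemp}, shows that with overwhelming probability only the contour joining $z_{1}$ to $z_{2}$ is visible in $V_{N/4}$. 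Thus $\mu$ restricted to the window is close to a mixture of the two-point states $\langle\ast\rangle_{\{z_{1},z_{2}(N)\}}$, and the mixing measure $d_{\mu}$ emerges as the limiting law of the \emph{direction} of $z_{2}(N)$; no a priori directional convergence under $\mu$ is ever needed.

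Your key uniqueness claim (that $\langle\ast\rangle^{\theta}$ is the \emph{only} Gibbs state whose interface has asymptotic angle $\theta$) also rests on two steps that are far from routine and are not supplied by Theorem~\ref{t3}, which concerns only the particular limits $\langle\ast\rangle_{N}^{\theta_{N}}$. First, reducing a state to the law of its interface requires identifying, conditionally on the whole random infinite trajectory $\gamma$, the spin law on either side with the full-plane pure phases; this is plausible at low temperature but is not a consequence of the cluster expansion as stated, since the weights in (\ref{10}) couple $\gamma$ to clusters which near the wall create an effective attraction --- exactly the delicate competition the paper delegates to \cite{IST}. Second, uniqueness of the law of an infinite interface with prescribed asymptotic slope is not a soft consequence of ``summable interactions in one dimension'': the effective increment variables are unbounded, summability alone does not imply uniqueness of one-dimensional Gibbs measures, and one must show that \emph{every} interface law with a.s.\ slope $\tan\theta$ is the corresponding tilted measure --- a statement of essentially the same depth as the theorem itself. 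If you want to keep your architecture, the honest route is to prove this uniqueness claim by the paper's own mechanism of finite-volume conditioning plus contour surgery; but then the tail-conditioning scaffolding becomes superfluous.
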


\section{Zero temperature case}

\label{zerotemp}

We start by considering the simplest case of zero temperature. Here, some of
the above theorems have to be modified slightly. Namely, the states
$\left\langle \ast\right\rangle ^{+\frac{\pi}{2}}$ and $\left\langle
\ast\right\rangle ^{-\frac{\pi}{2}}$ of Theorem \ref{t2} become trivial; they
are each concentrated on a single ground state configuration, $\sigma\equiv+1$
or $\sigma\equiv-1.$ Another trivial state is $\left\langle \ast\right\rangle
^{0}$ -- it is also concentrated on a single configuration, $\sigma^{\theta
=0}$ which is $+1$ (resp., $-1$) for $t>0$ (resp., $t<0$); all other states
$\left\langle \ast\right\rangle ^{\theta}$ are supported on infinitely many
ground state configurations.

The zero-temperature case is simpler because the relevant configurations have
only open contours (i.e., there are no loop contours), which do not interact,
except for a non-intersection condition --- see below.

Let $V_{N}\subset\mathbb{Z}^{2}$ be the box%
\[
V_{N}=\left\{  \left(  x,y\right)  :0\leq x\leq N,-N\leq y\leq N\right\}
\]
and let
\[
V_{N}^{\ast}=\left\{  \left(  s,t\right)  \in\mathbb{Z}^{2\ast}%
:0<s<N,-N<t<N\right\}
\]
with boundary $\partial V_{N}^{\ast}$. A spin boundary condition
$\sigma_{\partial V_{N}^{\ast}}$ on $\partial V_{N}^{\ast}$
%The boundary condition -- i.e. the configuration $\sigma_{\partial V_{N}}$ on
%$\partial V_{N}$ --
is specified by a collection of an even number of distinct points
$z_{1},...,z_{2k}\in\partial V_{N},$ where the configuration
%$\sigma_{\partial V_{N}}$
$\sigma_{\partial V_{N}^{\ast}}$ changes its value from $\pm1$ to $\mp1$. In
our case we can put $z_{1}$ to be the point $\left(  0,0\right)  ,$ and we
have to suppose that all the other points $z_{i}$ are in
\[
\partial^{-}V_{N}\,=\,\partial V_{N}\setminus\left\{  \left(  x,y\right)
:x=0,-N<y<N\right\}  .
\]
Every spin configuration $\sigma$ on $V_{N}^{\ast}$ with this boundary
condition defines a partition $p$ of the set $\left\{  z_{1},...,z_{2k}%
\right\}  $ into
%pairs $\left(  z_{i},z_{p\left(  i\right)  }\right)  $, in a following way:
pairs $\left\{  z_{i},z_{p\left(  i\right\}  }\right\}  $, in the following
way: among the Peierls contours of $\sigma$ there are precisely $k$ open
contours $\gamma_{i},$ $1\leq i\leq k,$ with $\cup_{i}\partial\left(
\gamma_{i}\right)  =\left\{  z_{1},...,z_{2k}\right\}  ;$ the rest of the
contours of $\sigma$ are closed contours, i.e., loops. Then the partition into
pairs is defined by
\[
%\left(  z_{i},z_{p\left(  i\right)  }\right)  =\partial\left(  \gamma
%_{i}\right)  .
\left\{  z_{i},z_{p\left(  i\right\}  }\right\}  =\partial\left(  \gamma
_{i}\right)  .
\]
If the configuration $\sigma$ is a ground state configuration, then it has no
loops, while the corresponding partition has minimal length: for any other
partition~$q$ we have%
\[
\sum_{i}\left\vert z_{i}-z_{p\left(  i\right)  }\right\vert _{1}\leq\sum
_{i}\left\vert z_{i}-z_{q\left(  i\right)  }\right\vert _{1},
\]
where $\left\vert \ast\right\vert _{1}$ denotes $L_{1}$-distance. Moreover,
\[
\sum_{i}\left\vert \gamma_{i}\right\vert =\sum_{i}\left\vert z_{i}-z_{p\left(
i\right)  }\right\vert _{1}.
\]
%Without loss of generality we can suppose that the indices are chosen is such
%a way that the partition $\left\{  \left(  z_{1},z_{2}\right)  ,...,\left(
%z_{2k-1},z_{2k}\right)  \right\}  $ is a ground state partition.
Without loss of generality we can suppose that the indices are chosen such
that the partition $\left\{  \left\{  z_{1},z_{2}\right\}  ,...,\left\{
z_{2k-1},z_{2k}\right\}  \right\}  $ is a ground state partition.

%We will argue now that all the segments $\left(  z_{3},z_{4}\right)
%,...,\left(  z_{2k-1},z_{2k}\right)  \subset\mathbb{R}^{2}$ are `far away from
%the segment $\left(  z_{1},z_{2}\right)  $ in the bulk'. It is enough to
%consider the case $k=2.$ Suppose the point $\zeta\in\left(  z_{1}%
%,z_{2}\right)  $ at distance at least $c_{1}N$ from $z_{2},$ is at distance
%$c_{2}N$ from the segment $\left(  z_{3},z_{4}\right)  .$ If $\arcsin
%\frac{c_{2}}{c_{1}}<\alpha,$ then $\alpha$ has to exceed $\arctan\frac{1}{2}$
%-- otherwise the ground state condition is violated. Hence, $c_{2}>c_{1}%
%\sin\left(  \arctan\frac{1}{3}\right)  .$
We will argue now that all the straight-line segments $\left[  z_{3}%
,z_{4}\right]  ,...,\left[  z_{2k-1},z_{2k}\right]  \subset\mathbb{R}^{2}$ are
far away from the segment $\left[  z_{1},z_{2}\right]  $ `in the bulk'. It is
enough to consider the case $k=2.$ Suppose the point $w\in\left[  z_{1}%
,z_{2}\right]  $ at ($L_{2}$) distance at least $c_{1}N$ from $z_{2},$ is at
distance $c_{2}N$ from the segment $\left[  z_{3},z_{4}\right]  ,$ see Fig. 1.
If $\arcsin\frac{c_{2}}{c_{1}}<\alpha,$ then $\alpha$ has to exceed
$\arctan\frac{1}{2}$ -- otherwise the ground state condition is violated.
Hence, $c_{2}>c_{1}\sin\left(  \arctan\frac{1}{3}\right)  .$

\begin{figure}[ptb]
%[!h]
\centering
\includegraphics[width=8cm]{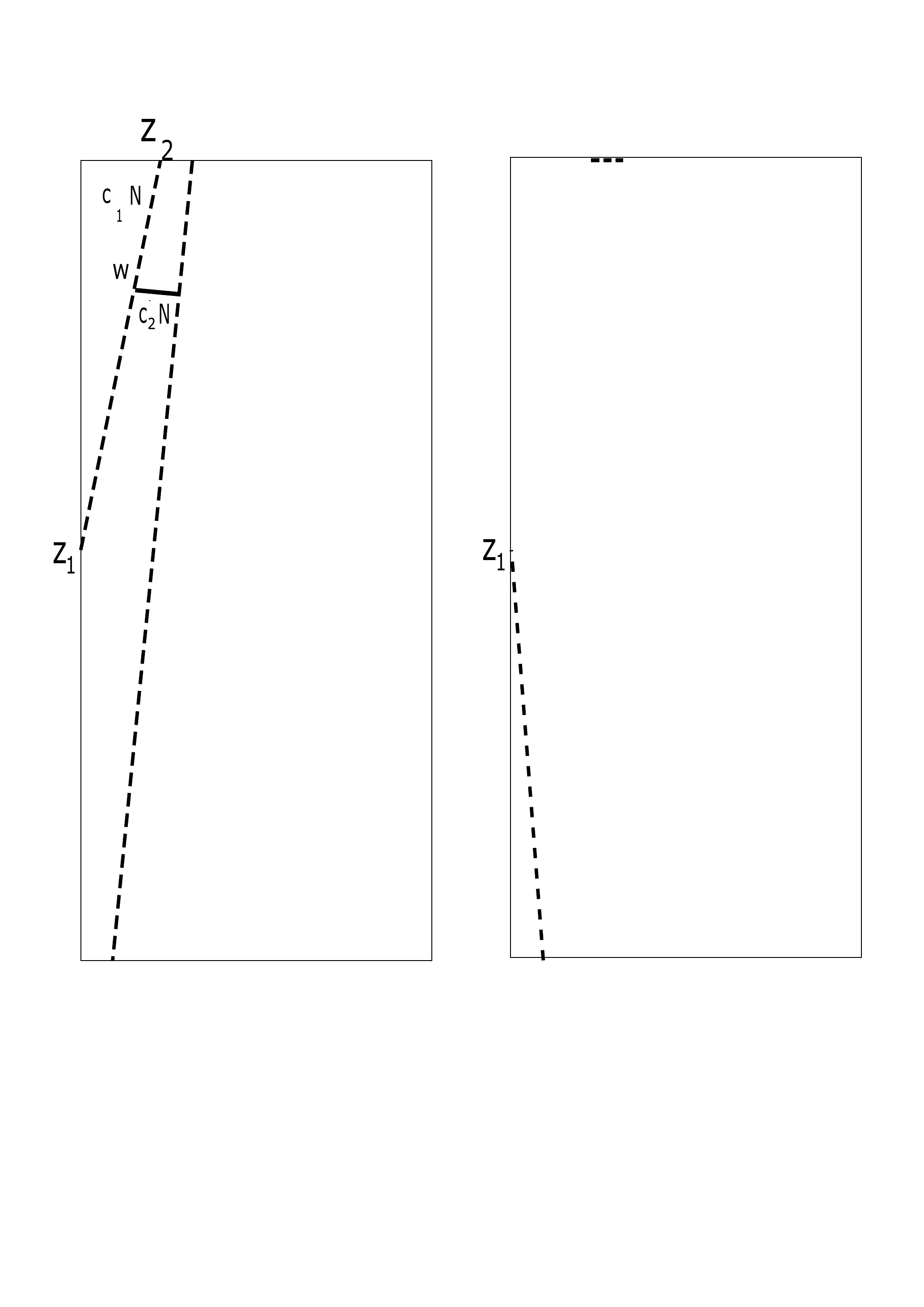}
%\label{twoTplot}
\caption{The surgery on long paths.}%
\end{figure}

By the same token, all the segments $\left[  z_{2l-1},z_{2l}\right]  $ with
$l>1$ cannot pass too close to the origin:%
\[
\left[  z_{2l-1},z_{2l}\right]  \cap V_{N/4}=\varnothing.
\]

%Let now $\left\langle \ast\right\rangle _{\left(  z_{1},z_{2}\right)
%,...,\left(  z_{2k-1},z_{2k}\right)  }^{N}$ be the ground state in $V_{N},$
%corresponding to the ground state partition $\left\{  \left(  z_{1}%
%,z_{2}\right)  ,...,\left(  z_{2k-1},z_{2k}\right)  \right\}  ,$ and $\sigma$
%be a ground state spin configuration from that state $\left\langle
%\ast\right\rangle _{\left(  z_{1},z_{2}\right)  ,...,\left(  z_{2k-1}%
%,z_{2k}\right)  }^{N}$. Let $\gamma_{1},...,\gamma_{k}$ be its collection of
%open contours. Then the probability in $\left\langle \ast\right\rangle
Let now $\left\langle \ast\right\rangle _{\left\{  z_{1},z_{2}\right\}
,...,\left\{  z_{2k-1},z_{2k}\right\}  }$ be the ground state in $V_{N},$
corresponding to the ground state partition $\left\{  \left\{  z_{1}%
,z_{2}\right\}  ,...,\left\{  z_{2k-1},z_{2k}\right\}  \right\}  ,$ and
$\sigma$ be a ground state spin configuration from that state $\left\langle
\ast\right\rangle _{\left\{  z_{1},z_{2}\right\}  ,...,\left\{  z_{2k-1}%
,z_{2k}\right\}  }$. Let $\gamma_{1},...,\gamma_{k}$ be its collection of open
contours. Then the probability in $\left\langle \ast\right\rangle _{\left\{
z_{1},z_{2}\right\}  ,...,\left\{  z_{2k-1},z_{2k}\right\}  }$
\[
\mathbf{\Pr}\left(  \left\{  \gamma_{2},...,\gamma_{k}\right\}  \cap
V_{N/4}\neq\varnothing\right)  \rightarrow0
\]
as $N\rightarrow\infty$, because for each $\varepsilon>0$ and each $l$ the
probability
\begin{equation}
\mathbf{\Pr}\left(  \mathrm{dist}\left(  \gamma_{l},\left(  z_{2l-1}%
,z_{2l}\right)  \right)  >N^{1/2+\varepsilon}\right)  \rightarrow0\text{ as
}N\rightarrow\infty. \label{11}%
\end{equation}

Summarizing, we can say that the distance between the projected states,
\[
\mathrm{dist}\left(  \left\langle \ast{\Huge |}_{V_{N/4}}\right\rangle
_{\left\{  z_{1},z_{2}\right\}  ,...,\left\{  z_{2k-1},z_{2k}\right\}
},\left\langle \ast{\Huge |}_{V_{N/4}}\right\rangle _{\left\{  z_{1}%
,z_{2}\right\}  }\right)  \rightarrow0
\]
as $N\rightarrow\infty,$ where\textbf{ }$z_{i}=z_{i}\left(  N\right)
,i=2,...,2k\in\partial^{-}V_{N}.$ Therefore, all the ground states of our
model are among the limit points of the states $\left\langle \ast\right\rangle
_{\left\{  z_{1},z_{2}\left(  N\right)  \right\}  }.$

In the case when the sequence $z_{2}\left(  N\right)  $ is in the ray
$r_{\theta}=\left\{  \left(  s,t\right)  :t=s\tan\theta\right\}  $ (or,
rather, to its neighborhood of radius $1/2,$ due to the rounding off to the
closest integer point) the existence of the limit ground state $\left\langle
\ast\right\rangle ^{\theta}=\lim_{N\rightarrow\infty}$ $\left\langle
\ast\right\rangle _{\left\{  z_{1},z_{2}\left(  N\right)  \right\}  }$ is
evident. It is supported by the set of infinite staircase contours $\gamma$
starting from $\left(  0,0\right)  $ and having asymptotic slope $\theta.$ The
probability in the state $\left\langle \ast\right\rangle ^{\theta}$ that the
initial piece $\gamma_{N}=\gamma\cap Z_{N}^{2}$ of the contour $\gamma$ has
$n$ vertical steps at $n$ prescribed locations (not necessarily different) on
the $Ox$ axis is given by $p^{n}\left(  1-p\right)  ^{N},$ with $p=p\left(
\theta\right)  =\tan\theta/\left(  1+\tan\theta\right)  .$

In case the sequence $\frac{z_{2}\left(  N\right)  }{N}$ has at least two
different subsequence limit points, the sequence of ground states
$\left\langle \ast\right\rangle _{\left(  z_{1},z_{2}\left(  N\right)
\right)  }$ also has at least two different limit points, due to $\left(
\ref{11}\right)  ,$and so does not have a unique limit.

Finally, suppose the sequence $\frac{z_{2}\left(  N\right)  }{N}$ has a limit,
and so defines the limiting ray $r_{\theta},$ $\theta=\theta\left[  \left(
z_{2}\left(  \ast\right)  \right)  \right]  .$ Let us show that $\lim
_{N\rightarrow\infty}$ $\left\langle \ast\right\rangle _{\left(  z_{1}%
,z_{2}\left(  N\right)  \right)  }=\left\langle \ast\right\rangle ^{\theta}.$
We suppose additionally that $\theta\in\left(  0,\frac{\pi}{4}\right)  ,$
since the case where $\theta\in\lbrack\pi/4,\pi/2)$ can be handled similarly.
Let $\bar{z}_{2}\left(  N\right)  \equiv\left(  \bar{a}\left(  N\right)
,\bar{b}\left(  N\right)  \right)  =\lfloor r_{\theta}\cap\partial
V_{N}\rfloor$ be the `integer part' of $r_{\theta}\cap\partial V_{N}$, and let
$z_{2}\left(  N\right)  \equiv\left(  a\left(  N\right)  ,b\left(  N\right)
\right)  =\bar{z}_{2}\left(  N\right)  +o\left(  N\right)  $ as $N\rightarrow
\infty.$ Let us fix a point $\left(  m,n\right)  \in\mathbb{R}^{2}$ with
$m,n>0;$ for all $N$ large enough we have $\left(  m,n\right)  \in V_{N}.$ To
simplify our exposition, we consider the case when $a\left(  N\right)
=\bar{a}\left(  N\right)  ,$ while $b\left(  N\right)  \geq\bar{b}\left(
N\right)  ,$ so $b\left(  N\right)  =\bar{b}\left(  N\right)  +o\left(
N\right)  .$ The relevant `ratio of partition functions' in our case is just
the ratio of binomial coefficients,%
\[
\frac{\dbinom{a\left(  N\right)  +b\left(  N\right)  -m-n}{b\left(  N\right)
-n}}{\dbinom{a\left(  N\right)  +\bar{b}\left(  N\right)  -m-n}{\bar{b}\left(
N\right)  -n}},
\]
which goes to $1$ as $N\rightarrow\infty,$ for any $m,n$ (though not uniformly
in $m,n$) in case $b\left(  N\right)  =\bar{b}\left(  N\right)  +o\left(
N\right)  .$ That follows from Stirling's formula.

In words, the reason for the identity $\lim_{N\rightarrow\infty}$
$\left\langle \ast\right\rangle _{\left(  z_{1},z_{2}\left(  N\right)
\right)  }=\left\langle \ast\right\rangle ^{\theta}$ is that the probability
of having the `extra' $o\left(  N\right)  $ (uniformly distributed) vertical
segments (which one has to add to $\bar{b}\left(  N\right)  $ in order to get
$b\left(  N\right)  $ of them) to be located at any of the first $m$ positions
of the segment $\left[  1,a\left(  N\right)  \right]  $, goes to zero as
$N\rightarrow\infty.$

\section{Low temperature case \label{lowtemp}}

For $T>0$ the problem becomes more involved, since the contours $\gamma
_{1},...,\gamma_{k}$ are interacting. We first describe their joint
distribution. The following formula follows from the cluster expansion
technique; it can be found in \cite{DKS}. The weight of our family
$\Gamma=\left\{  \gamma_{1},...,\gamma_{k}\right\}  $ is given by
\begin{equation}
w\left(  \Gamma\right)  =\exp\left\{  -\beta\left\vert \Gamma\right\vert
+\sum_{\substack{C\subset V_{N}:\\C\cap\Delta_{\Gamma}\neq\varnothing}%
}\Phi\left(  C\right)  \right\}  , \label{10}%
\end{equation}
where $\beta=\frac{1}{T}$ is the inverse temperature, $\left\vert
\Gamma\right\vert =\left\vert \gamma_{1}\right\vert +...+\left\vert \gamma
_{k}\right\vert $ is the total length of our contours, while the term
$\sum_{C:C\cap\Delta_{\Gamma}\neq\varnothing}\Phi\left(  C\right)  ,$ which we
explain now, contains the interaction between $\gamma$'s. The sum is taken
over all connected subsets ($\equiv$clusters) $C\subset V_{N}$. The notation
$C\cap\Delta_{\gamma}\neq\varnothing$ essentially means that $C$ intersects
the union $\Gamma$ of the contours $\gamma_{i}$, while the function
$\Phi\left(  C\right)  $ (which, of course, depends also on $\beta$) has the
following properties:

\begin{itemize}
\item Decay: for all $\beta$ sufficiently large,
\begin{equation}
\left\vert \Phi\left(  C\right)  \right\vert \leq\exp\left\{  -2{\beta
}(\mathrm{diam}_{\infty}(C)+1)\right\}  . \label{07}%
\end{equation}

\item Symmetry: the function $\Phi$ is translation invariant.
\end{itemize}

In what follows we will rely on the results of~ \cite{DKS}. However, there are
some important differences between our situation and the one treated there. In
\cite{DKS} the analysis is restricted to the case of periodic boundary
conditions, while here our contours $\gamma$ are in the box $V_{N}$ and so
interact with its boundary $\partial V_{N}$: -- inspecting the relation
$\left(  \ref{10}\right)  $, we see that the part of the contour in the bulk
lives in a different potential landscape than the part near the boundary.
Indeed, let $\lambda$ be some fragment of $\gamma,$ and suppose the cluster
$C\subset V_{N}$ intersects $\lambda$ and thus contributes to $\left(
\ref{10}\right)  $ an amount $\Phi\left(  C\right)  .$ Let us shift $\lambda$
by a vector $s\in\mathbb{Z}^{2},$ in such a way that $\lambda+s$ is still in
$V_{N},$ but $C+s$ is not. Then the corresponding contribution $\Phi\left(
C+s\right)  \left(  =\Phi\left(  C\right)  \right)  $ is missing near the
boundary. In case $\Phi\left(  C\right)  <0$ that would mean that $\gamma$ is
effectively attracted to the wall $\partial V_{N}$. The study of this issue
turns out to be quite complicated technically; it is done in \cite{IST}. A
main result of \cite{IST} is that even if such an attraction is present, it is
beaten by the entropic repulsion of $\gamma$ from the wall $\partial V_{N},$
provided $\left(  \ref{07}\right)  $ holds. (In fact, the result of \cite{IST}
is more precise: the entropic repulsion beats an attraction of strength
$-\exp\left\{  -\chi{\beta}(\mathrm{diam}_{\infty}(C)+1)\right\}  $ provided
$\chi>\frac{1}{2},$ but can fail against it for $\chi<\frac{1}{2}.$)

Let us show that the states $\left\langle \ast\right\rangle ^{\theta}$ are
distinct for different $\theta\in\left(  -\frac{\pi}{2},\frac{\pi}{2}\right)
.$ For that we will use the following result of \cite{DKS}, contained in
Theorem~4.16 there (see also relation~(2.5) and the Theorem 29 of \cite{IST},
where an error in \cite{DKS} is corrected). To lighten the notation, we
consider only the case when $\theta\in\left[  -\frac{\pi}{4},\frac{\pi}%
{4}\right]  .$

Let us introduce the following cigar-shaped subset $U_{N,d,\varkappa}\subset
V_{N}:$%
\[
U_{N,d,\varkappa}=\left\{  \left(  x,y\right)  \in V_{N}:\left\vert
y-x\tan\theta\right\vert \leq d\left(  \frac{x\left(  N-x\right)  }{N}\right)
^{\frac{1}{2}+\varkappa}\right\}  ,
\]
where $d,\varkappa>0.$ Let
\[
U_{N,d,\varkappa,R}=U_{N,d,\varkappa}\cup D_{R}\left(  0,0\right)  \cup
D_{R}\left(  N,N\tan\theta\right)
\]
be the union of $U_{N,d,\varkappa}$ and two disks of radius $R$ centered at
the endpoints of the line segment $((0,0),(N, N\tan\theta))$.
%$\left[  \left(  0,0\right)  ;\left(
%N,N\tan\theta\right)  \right]  .$

Every configuration from the state $\left\langle \ast\right\rangle
_{N}^{\theta}$ possesses exactly one open contour, which we denote by
$\gamma.$ Then for any $\varkappa>0$ the probability of the event
\[
\gamma\subset U_{N,d,\varkappa,R},
\]
computed in this state $\left\langle \ast\right\rangle _{N}^{\theta},$ goes to
$1,$ as $R$ increases, uniformly in $N.$ That shows that $\left\langle
\ast\right\rangle ^{\theta_{1}}\neq\left\langle \ast\right\rangle ^{\theta
_{2}}$ for different $\theta$'s.

Now we will show that $\lim_{N\rightarrow\infty}\left\langle \ast\right\rangle
_{N}^{\theta^{\prime}\left(  N\right)  }=\left\langle \ast\right\rangle
^{\theta}$ if $\lim_{N\rightarrow\infty}\theta^{\prime}\left(  N\right)
=\theta.$ To see that we will check that the states $\left\langle
\ast\right\rangle ^{\theta^{\prime}}$ and $\left\langle \ast\right\rangle
^{\theta}$ are absolutely continuous with respect to each other. So we
introduce the partition function%
\begin{align*}
Z\left(  \theta,N\right)   &  =\sum_{\gamma:\left(  0,0\right)
\rightsquigarrow\left(  N,N\tan\theta\right)  }w\left(  \gamma\right)  \\
&  \equiv\sum_{\gamma:\left(  0,0\right)  \rightsquigarrow\left(
N,N\tan\theta\right)  }\exp\left\{  -\beta\left\vert \gamma\right\vert
+\sum_{\substack{C\subset V_{N}:\\C\cap\Delta_{\gamma}\neq\varnothing}%
}\Phi\left(  C\right)  \right\}  ,
\end{align*}
where the sum is taken over all paths $\gamma$ in $V_{N},$ connecting $\left(
0,0\right)  $ to $\left(  N,N\tan\theta\right)  .$ We denote by $Z\left(
\theta^{\prime},N\right)  $ the partition function $Z\left(  \theta^{\prime
}\left(  N\right)  ,N\right)  .$ Let us fix an integer $n,$ and let $\eta$ be
a path in $V_{n},$ connecting $\left(  0,0\right)  $ to a point $\left(
k,l\right)  \in\partial V_{n}.$ We will need the partition functions%
\[
Z\left(  \theta,N,\eta\right)  =\sum_{\substack{\gamma:\left(  k,l\right)
\rightsquigarrow\left(  N,N\tan\theta\right)  \\\eta\cap\gamma=\varnothing
}}w\left(  \eta\cup\gamma\right)  ,
\]
where $\eta\cap\gamma=\varnothing$ means that the contours $\gamma$ and $\eta$
are compatible, so that their concatenation $\eta\cup\gamma$ is a legitimate
contour, $\eta\cup\gamma:\left(  0,0\right)  \rightsquigarrow\left(
N,N\tan\theta\right)  .$ The partition function $Z\left(  \theta^{\prime
},N,\eta\right)  $ is defined in the obvious way. Our claim boils down to
showing that the cross-ratio%
\begin{equation}
\frac{Z\left(  \theta^{\prime},N,\eta\right)  }{Z\left(  \theta^{\prime
},N\right)  }\frac{Z\left(  \theta,N\right)  }{Z\left(  \theta,N,\eta\right)
}\label{13}%
\end{equation}
is bounded from above and below, uniformly in $N.$

The bounds on the above partition functions, provided by the analysis of
\cite{DKS} -- see relations (4.11.3), (4.12.3) there -- show that
\begin{equation}
\ln Z\left(  \theta,N\right)  =-\beta\tau_{\beta}\left(  \theta\right)
\frac{N}{\cos\theta}-\frac{1}{2}\ln\frac{N}{\cos\theta}+O\left(  1\right)
,\label{14}%
\end{equation}
where $\tau_{\beta}\left(  \ast\right)  $ is the surface tension function.
(Unlike the situation of a 1D Gibbs field with finite spin state space, where
the log of the partition function $\ln Z\left(  N\right)  =-\beta f+O\left(
1\right)  ,$ here we have the additional \textit{universal }
`Ornstein-Zernike' term $-\frac{1}{2}\ln N,$ see \cite{CIV}.) Plugging
$\left(  \ref{14}\right)  $ into $\left(  \ref{13}\right)  $ and using
analyticity of the function $\tau_{\beta}\left(  \ast\right)  $ in $\theta,$
we see that the terms which grow in $N$ cancel each other; thus we establish
the boundedness of $\left(  \ref{13}\right)  $ in $N.$ (Of course, it is not
uniform in $\eta.$)

The same argument shows that the cross-ratio%
\[
\frac{Z\left(  \theta,N_{1},\eta\right)  }{Z\left(  \theta,N_{1}\right)
}\frac{Z\left(  \theta,N_{2}\right)  }{Z\left(  \theta,N_{2},\eta\right)  }%
\]
is bounded from above and below uniformly in $N_{1},N_{2}$ (again, not
uniformly in $\eta$), which shows the existence of the weak limit
$\lim_{N\rightarrow\infty}\left\langle \ast\right\rangle _{N}^{\theta
}=\left\langle \ast\right\rangle ^{\theta}.$

Finally we show that every state of our system is a mixture of various
$\left\langle \ast\right\rangle ^{\theta}$'s. This will be obtained as an
adaptation of the arguments from the previous section. Namely, we will show
that the contour $\gamma_{1}$ is the only contour visible in any finite
vicinity of the point $\left(  0,0\right)  ,$ with overwhelming probability as
$N\rightarrow\infty.$ To that end we will show that the total length of the
collection $\gamma_{1},...,\gamma_{k}$ is sufficiently close to its minimal
possible value, with high probability, provided the temperature is low enough.
That would imply that among the contours $\gamma_{1},...,\gamma_{k}$ there is
only one -- $\gamma_{1}$ -- which comes into the vicinity of the point
$\left(  0,0\right)  .$ To this end we introduce another collection of open
Ising contours, $\nu_{1},...,\nu_{k}$ in $V_{N},$ which has the same set of
end-points $z_{1},...,z_{2k},$ and which has minimal total length $\left\vert
\nu_{1}\right\vert +...+\left\vert \nu_{k}\right\vert $ among all such
collections. (In other words, the collection $\nu_{1},...,\nu_{k}$ defines the
ground state spin configuration.)

\begin{lemma}
Let $s>0,$ and $L>\left(  \left\vert \nu_{1}\right\vert +...+\left\vert
\nu_{k}\right\vert \right)  \left(  1+s\right)  .$ Then there exists
$\beta\left(  s\right)  $ such that%
\[
\mathbf{\Pr}\left(  \left\vert \gamma_{1}\right\vert +...+\left\vert
\gamma_{k}\right\vert >L\right)  \leq\exp\left\{  -\left(  \beta-\beta\left(
s\right)  \right)  \left[  L-\left(  \left\vert \nu_{1}\right\vert
+...+\left\vert \nu_{k}\right\vert \right)  \right]  \right\}  .
\]

\end{lemma}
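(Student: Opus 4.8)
The plan is to treat this as a Peierls-type large-deviation estimate: I would write the probability as a ratio of weighted sums over contour families and play the energetic cost $\beta\left\vert \Gamma\right\vert$ off against the combined effect of the cluster interaction and the configurational entropy, using the hypothesis $L>(1+s)M$ (where I abbreviate $M=\left\vert \nu_{1}\right\vert +\cdots+\left\vert \nu_{k}\right\vert $) to convert an additive $M$-discrepancy into a multiplicative reduction of the rate. First I would write
\[
\mathbf{\Pr}\left(  \left\vert \gamma_{1}\right\vert +\cdots+\left\vert
\gamma_{k}\right\vert >L\right)  =\frac{\sum_{\Gamma:\left\vert \Gamma
\right\vert >L}w\left(  \Gamma\right)  }{\sum_{\Gamma}w\left(  \Gamma\right)
},
\]
where the sums run over all families $\Gamma=\left\{  \gamma_{1},\ldots
,\gamma_{k}\right\}  $ of open contours in $V_{N}$ with the prescribed
endpoints $z_{1},\ldots,z_{2k}$ and $w\left(  \Gamma\right)  $ is the weight
$\left(  \ref{10}\right)  $. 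For the denominator I keep only the minimal
family $\nu=\left\{  \nu_{1},\ldots,\nu_{k}\right\}  $, obtaining
$\sum_{\Gamma}w\left(  \Gamma\right)  \geq w\left(  \nu\right)  $.

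Next I would control the interaction term in $\left(  \ref{10}\right)  $. Since
each cluster $C$ with $C\cap\Delta_{\Gamma}\neq\varnothing$ meets $\Gamma$ in at
least one site, the decay estimate $\left(  \ref{07}\right)  $ gives
\[
\left\vert \sum_{C\cap\Delta_{\Gamma}\neq\varnothing}\Phi\left(  C\right)
\right\vert \leq\left\vert \Gamma\right\vert \sum_{C\ni0}\exp\left\{
-2\beta\left(  \mathrm{diam}_{\infty}\left(  C\right)  +1\right)  \right\}
=:c\left(  \beta\right)  \left\vert \Gamma\right\vert ,
\]
where $c\left(  \beta\right)  \rightarrow0$ as $\beta\rightarrow\infty$ because
the number of connected sets through a fixed site with a given $\infty$-diameter
grows only exponentially in that diameter. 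This bound is automatically uniform
in $N$, the boundary of $V_{N}$ only truncating the sum. Hence $\exp\{-(\beta
+c(\beta))\left\vert \Gamma\right\vert \}\leq w(\Gamma)\leq\exp\{-(\beta
-c(\beta))\left\vert \Gamma\right\vert \}$. Combining the upper bound with the
entropy estimate that the number of admissible families with $\left\vert
\Gamma\right\vert =m$ is at most $e^{\kappa m}$ for some $\kappa$ independent of
$N$ (a connective-constant count for $k$ lattice paths with fixed endpoints),
and summing the geometric series over $m>L$, I would reach
\[
\mathbf{\Pr}\left(  \left\vert \Gamma\right\vert >L\right)  \leq C\left(
\beta\right)  \exp\left\{  -(\beta-c(\beta)-\kappa)L+(\beta+c(\beta))M\right\}
,
\]
with $C(\beta)=\left(  1-e^{-(\beta-c(\beta)-\kappa)}\right)  ^{-1}$ close to
$1$ for $\beta$ large. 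The main obstacle is precisely this step: obtaining an
interaction bound that is linear in $\left\vert \Gamma\right\vert $ with a
coefficient that is simultaneously small for large $\beta$ \emph{and} uniform in
$N$, so that the net rate $\beta-c(\beta)-\kappa$ stays positive as
$N\rightarrow\infty$. For the present crude bound $\left(  \ref{07}\right)  $
already supplies this; the finer fact that entropic repulsion beats boundary
attraction, from \cite{IST}, is only needed later for the sharp asymptotics
$\left(  \ref{14}\right)  $.

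Finally I would convert this to the stated form. Writing $L=M+(L-M)$, the
exponent becomes $-(\beta-c(\beta)-\kappa)(L-M)+(2c(\beta)+\kappa)M$, and the
hypothesis $L>(1+s)M$, i.e. $M\leq(L-M)/s$, lets me bound the last term by
$\tfrac{2c(\beta)+\kappa}{s}(L-M)$. Setting
\[
\beta\left(  s\right)  =c_{0}\Big(1+\tfrac{2}{s}\Big)+\kappa\Big(1+\tfrac{1}
{s}\Big),\qquad c_{0}=\sup_{\beta\geq\beta_{0}}c(\beta),
\]
(finite once $\beta_{0}$ is large enough for the cluster sum to converge, and
independent of $\beta$) then yields $\exp\{-(\beta-\beta(s))(L-M)\}$, the
prefactor $C(\beta)$ being absorbed into an arbitrarily small further increase
of $\beta(s)$ since $L-M\geq sM\geq s$ bounds $(L-M)$ away from zero. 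This is
exactly the claimed inequality.
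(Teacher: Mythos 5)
Your route is genuinely different from the paper's, and unfortunately it breaks at exactly the step you single out as the main obstacle. You claim that
\[
\sum_{C\ni 0}\exp\left\{  -2\beta\left(  \mathrm{diam}_{\infty}\left(
C\right)  +1\right)  \right\}  =:c\left(  \beta\right)
\]
is finite (and small for large $\beta$) because ``the number of connected sets through a fixed site with a given $\infty$-diameter grows only exponentially in that diameter.'' That counting statement is false. The number of connected subsets of $\mathbb{Z}^{2}$ of $\infty$-diameter at most $d$ containing a fixed site is of order $2^{cd^{2}}$, not $e^{O\left(  d\right)  }$: inside a $d\times d$ box whose bottom-left corner is the fixed site, take all sites of the even columns together with the full bottom row, and adjoin an \emph{arbitrary} subset of the odd-column sites; every such set is connected (each odd-column site touches an even column), and there are $2^{\Omega\left(  d^{2}\right)  }$ of them, all of diameter about $d$. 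Consequently the sum above diverges for every $\beta$, so the decay property $\left(  \ref{07}\right)  $ alone cannot yield your linear interaction bound $c\left(  \beta\right)  \left\vert \Gamma\right\vert $ --- indeed, taken literally it does not even make the weight $\left(  \ref{10}\right)  $ manifestly well defined. Your sentence ``the present crude bound $\left(  \ref{07}\right)  $ already supplies this'' is precisely where the proof fails.

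The fact you need is true, but it comes from a stronger property of the DKS cluster weights which you would have to invoke: $\left\vert \Phi\left(  C\right)  \right\vert $ decays exponentially in the \emph{size} of the cluster (the total length of the closed contours composing it), of which the diameter bound $\left(  \ref{07}\right)  $ is only a weak corollary; equivalently, one can quote the convergence statement of the cluster expansion, $\sum_{C\ni x}\left\vert \Phi\left(  C\right)  \right\vert \leq\varepsilon\left(  \beta\right)  \rightarrow0$ as $\beta\rightarrow\infty$. With that substitution, the rest of your bookkeeping (denominator bounded below by $w\left(  \nu\right)  $, entropy $e^{\kappa m}$ per total length $m$, and the conversion of the additive $M$-term into a multiplicative loss via $M\leq\left(  L-M\right)  /s$) is sound and parallels the paper's final energy--entropy comparison. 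Note, however, that the paper avoids the entire issue: it never uses the effective weights $\left(  \ref{10}\right)  $ in this proof, but works directly on spin configurations, mapping $\sigma$ to the configuration with bond set $B\left(  \sigma\right)  \bigtriangleup D$, where $D=\left\{  \gamma_{1}\cup...\cup\gamma_{k}\right\}  \bigtriangleup\left\{  \nu_{1}\cup...\cup\nu_{k}\right\}  $ is a union of closed loops. This map is injective and lowers the energy by at least $\left\vert \Gamma\right\vert -M$, giving the per-family bound $\exp\left\{  -\beta\left(  \left\vert \Gamma\right\vert -M\right)  \right\}  $ with no interaction terms to control at all (the loop contours are carried along exactly by the bijection rather than summed out), after which the same $3^{L}$-versus-energy argument concludes.
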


\begin{proof}
Consider the set of bonds $D$, which is the symmetric difference $\left\{
\gamma_{1}\cup...\cup\gamma_{k}\right\}  \bigtriangleup\left\{  \nu_{1}%
\cup...\cup\nu_{k}\right\}  .$ $D$ is a collection of closed contours, and if
$\left\vert \gamma_{1}\right\vert +...+\left\vert \gamma_{k}\right\vert >L,$
then $\left\vert D\right\vert >L-\left(  \left\vert \nu_{1}\right\vert
+...+\left\vert \nu_{k}\right\vert \right)  .$ Now we define the Peierls
transformation, which from every configuration $\sigma$ containing the
contours $\gamma_{1},...,\gamma_{k}$ produces another configuration $\pi
_{D}\left(  \sigma\right)  ,$ which satisfies the same boundary condition
$\left\{  z_{1},...,z_{2k}\right\}  $ but whose energy is smaller by at least
$L-\left(  \left\vert \nu_{1}\right\vert +...+\left\vert \nu_{k}\right\vert
\right)  .$ Moreover, if $\sigma\neq\sigma^{\prime},$ then $\pi_{D}\left(
\sigma\right)  \neq\pi_{D}\left(  \sigma^{\prime}\right)  .$ The construction
is the following: if $\sigma$ corresponds to the set $B\left(  \sigma\right)
$ of bonds forming all the contours of $\sigma,$ then $\pi_{D}\left(
\sigma\right)  $ corresponds to the bond set $B\left(  \sigma\right)
\bigtriangleup D.$ That correspondence proves the estimate%
\[
\mathbf{\Pr}\left(  \left\{  \sigma:\gamma_{1}\cup...\cup\gamma_{k}\subset
B\left(  \sigma\right)  \right\}  \right)  \leq\exp\left\{  -\beta\left(
\left\vert \gamma_{1}\right\vert +...+\left\vert \gamma_{k}\right\vert
-\left\vert \nu_{1}\right\vert -...-\left\vert \nu_{k}\right\vert \right)
\right\}  .
\]

Assuming $L>\left(  \left\vert \nu_{1}\right\vert +...+\left\vert \nu
_{k}\right\vert \right)  \left(  1+s\right)  $ we have $L-\left(  \left\vert
\nu_{1}\right\vert +...+\left\vert \nu_{k}\right\vert \right)  >s^{\prime}L$
with $s^{\prime}>0.$ Therefore the entropy factor $3^{L}$ is beaten by the
energy gain $L-\left(  \left\vert \nu_{1}\right\vert +...+\left\vert \nu
_{k}\right\vert \right)  ,$ and the proof follows.
\end{proof}

From the above lemma we conclude that the contours $\gamma_{2},...,\gamma_{k}$
stay close to the ground state contours, so the results of the previous
section apply.

The only items left to prove are the properties of the states $\left\langle
\ast\right\rangle ^{\pm\frac{\pi}{2}}.$ The relation $\left(  \ref{17}\right)
$ is proven in the same way as was used in proving that the states
$\left\langle \ast\right\rangle ^{\theta_{1}}$ and $\left\langle
\ast\right\rangle ^{\theta_{2}}$ are different. What is needed for $\left(
\ref{18}\right)  $ is that the contour $\gamma_{1},$ distributed according to
the state $\left\langle \ast\right\rangle ^{+\frac{\pi}{2}}$ fluctuates away
from $Oy$ (the positive $y$-axis). The precise way in which it does this can
be analysed by using the \textit{methods }of the paper \cite{IST}. But even
the \textit{results }of~\cite{IST} show that it does fluctuate away from $Oy.$
In what follows we will prove one version of this phenomenon. First, we
introduce some notation.

Let $\gamma\in V_{N}$ be a contour connecting the points $\left(  0,0\right)
$ and $\left(  0,N\right)  .$ For every integer $h$ define the set
$Y_{N,h}\left(  \gamma\right)  $ by%
\[
Y_{N,h}\left(  \gamma\right)  =\left\{  y\in\left[  0,N\right]  :\max\left\{
m:\left[  0,m\right]  \cap\gamma=\varnothing\right\}  >h\right\}  .
\]
In words, $Y_{N,h}\left(  \gamma\right)  $ is the set of locations where the
contour $\gamma$ is farther than $h$ from $Oy.$

\begin{proposition}
For all temperatures low enough and for every $h>0$ we have
\begin{equation}
\mathbb{E}_{N}\left(  \frac{\left\vert Y_{N,h}\left(  \gamma_{1}\right)
\right\vert }{N}\right)  \rightarrow1\text{ as }N\rightarrow\infty,\label{21}%
\end{equation}
where the expectation is computed in the state $\left\langle \ast\right\rangle
_{N}^{+\frac{\pi}{2}}.$
\end{proposition}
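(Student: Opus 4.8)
The plan is to reduce the statement to a single-height estimate and then bound the probability that $\gamma_1$ comes close to the wall at an interior height by a ratio of (confined) partition functions. Since $0\le |Y_{N,h}(\gamma_1)|/N\le 1$, Fubini gives
\[
\mathbb{E}_N\left(\frac{|Y_{N,h}(\gamma_1)|}{N}\right)=\frac{1}{N}\int_0^N \mathbb{P}_N^{+\frac{\pi}{2}}\big(y\in Y_{N,h}(\gamma_1)\big)\,dy,
\]
where the complementary event $\{y\notin Y_{N,h}\}$ is exactly the event that $\gamma_1$ stays within horizontal distance $h$ of $Oy$ at height $y$. Fixing $\delta>0$ and splitting the integral into the bulk $[\delta N,(1-\delta)N]$ and the two end zones (of total length $2\delta N$, hence contributing at most $2\delta$ to the fraction), it suffices to show that
\[
\sup_{\delta N\le y\le (1-\delta)N}\mathbb{P}_N^{+\frac{\pi}{2}}\big(\mathrm{dist}(\gamma_1,Oy)\le h\text{ at height }y\big)\to 0\quad (N\to\infty);
\]
letting $\delta\downarrow 0$ at the end then yields $\mathbb{E}_N(|Y_{N,h}|/N)\to 1$.

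For the single-height estimate I would pass to partition functions, as in $(\ref{13})$--$(\ref{14})$. Write $Z(N)$ for the sum of the weights $w(\gamma)$ over all contours $\gamma:(0,0)\rightsquigarrow(0,N)$ in $V_N$, and $Z^{(y,h)}(N)$ for the same sum restricted to contours passing within horizontal distance $h$ of $Oy$ at height $y$; the probability above equals $Z^{(y,h)}(N)/Z(N)$. A contour counted in $Z^{(y,h)}(N)$ visits some point $(x',y')$ with $x'\le h$ and $y'$ near $y$, and thus decomposes into two sub-contours from $(0,0)$ to $(x',y')$ and from $(x',y')$ to $(0,N)$, each confined to $\{s>0\}$. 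Because both endpoints of $\gamma$ lie on the wall, $\gamma$ is an \emph{excursion}: unlike the bridge case of $(\ref{14})$, the confinement to the half-plane replaces the Ornstein--Zernike prefactor $N^{-1/2}$ by the excursion prefactor $N^{-3/2}$. Applying this to each factor, the splitting bound gives, up to a factor $C(h)$ polynomial in $h$ coming from the choice of near-wall point,
\[
\frac{Z^{(y,h)}(N)}{Z(N)}\le C(h)\,\frac{y^{-3/2}(N-y)^{-3/2}}{N^{-3/2}}=C(h)\,N^{-3/2}\big[u(1-u)\big]^{-3/2},\quad u=y/N,
\]
which is $O_\delta(N^{-3/2})$ uniformly on the bulk. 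This is far stronger than needed and closes the argument.

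The conceptual reason behind the excursion exponent --- and the main obstacle --- is the \emph{entropic repulsion} of $\gamma$ from the wall. Under the effective random-walk (Ornstein--Zernike) representation of the contour weights coming from the cluster expansion $(\ref{10})$, the horizontal displacement of $\gamma$ is a random walk conditioned to stay nonnegative, which after diffusive rescaling converges to a Brownian excursion; since a Brownian excursion is strictly positive on the open interval, the rescaled distance to $Oy$ at any interior height diverges, giving the pointwise vanishing used above. Making this rigorous is exactly the content of the wall analysis of \cite{IST}: the subtlety is that the interaction term in $(\ref{10})$ can render the wall \emph{attractive}, so one must invoke the result of \cite{IST} that, under the decay $(\ref{07})$ (here with $\chi=2>\tfrac{1}{2}$), entropic repulsion dominates any such attraction and produces a genuine excursion with the claimed $N^{-3/2}$ prefactor. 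The remaining care is in justifying the near-wall splitting of $Z^{(y,h)}(N)$ into a product of two confined partition functions up to the cluster-expansion corrections near the pinch point, and in checking that $C(h)$ is indeed only polynomial in $h$; both are routine once the excursion asymptotics of \cite{IST} are in hand.
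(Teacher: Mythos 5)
Your reduction (Fubini plus discarding the end zones) is sound, and the excursion heuristic correctly identifies the mechanism, but the quantitative core of your argument does not close with the inputs you invoke. The bound $Z^{(y,h)}(N)/Z(N)\le C(h)\,N^{-3/2}\bigl[u(1-u)\bigr]^{-3/2}$ requires prefactor-sharp asymptotics for the half-plane--confined interacting polymer: an upper bound of order $e^{-\beta\tau_{\beta}(\pi/2)M}M^{-3/2}$ for each of the two pinched pieces (whose endpoints lie on or near the wall), and a matching lower bound for the denominator. Neither is available from anything cited in the paper. The wall result of \cite{IST} actually used here, namely (\ref{20}), fixes $\ln Z$ only up to $o(N)$, and an $e^{o(N)}$ multiplicative indeterminacy swallows every polynomial prefactor, so no ratio of partition functions can be controlled from it. The sharp asymptotics (\ref{14}) of \cite{DKS} concern bridges with no wall; if you bound the two pieces by (\ref{14}) (ignoring that even this comparison is complicated by the clusters missing near $\partial V_{N}$, which can make the wall attractive) and grant yourself the lower bound $Z(N)\ge c\,e^{-\beta\tau_{\beta}(\pi/2)N}N^{-3/2}$, the ratio comes out as $C(h)\,N^{1/2}\bigl[u(1-u)\bigr]^{-1/2}$, which diverges. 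So everything hinges on the $M^{-3/2}$ \emph{upper} bound for confined pieces, i.e., on a prefactor-level form of entropic repulsion valid in the presence of the effective wall attraction created by the missing clusters in (\ref{10}) --- exactly the delicate phenomenon the paper warns about (attraction can even win when $\chi<1/2$). Declaring this step ``routine once the excursion asymptotics of \cite{IST} are in hand'' is the genuine gap: those asymptotics are not among the \emph{results} of \cite{IST} (the paper itself is careful to say only that the \emph{methods} of \cite{IST} could yield such precision), and proving them is the hard part of your route.

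The paper's own proof is deliberately soft and avoids prefactors altogether, which is worth comparing with. For every height at which $\gamma_{1}$ comes within $h$ of $Oy$, it adjoins to the cluster sum an artificial cluster $\bar{C}$ --- a horizontal segment of length $h$ joining $Oy$ to $\gamma_{1}$ --- with weight $e^{-\beta(h+1)}$; this tilts the partition function by exactly $\exp\bigl\{e^{-\beta(h+1)}\bigl(N-\left\vert Y_{N,h}(\gamma_{1})\right\vert \bigr)\bigr\}$. These clusters satisfy the decay hypothesis of \cite{IST} with $\chi=1>\tfrac{1}{2}$, so the tilted free energy still equals $\tau_{\beta}(\pi/2)$, i.e., the tilt changes $\ln Z$ by $o(N)$ only. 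Since by Jensen's inequality the tilt is at least $\exp\bigl\{e^{-\beta(h+1)}\,\mathbb{E}_{N}\bigl(N-\left\vert Y_{N,h}\right\vert \bigr)\bigr\}$, it follows that $\mathbb{E}_{N}\bigl(N-\left\vert Y_{N,h}\right\vert \bigr)=o(N)$, which is (\ref{21}). Your route, if completed, would give a strictly stronger, pointwise $O(N^{-3/2})$ statement; but as written it assumes the sharp excursion estimates rather than proves them, whereas the paper's tilting argument needs only exponential-order information.
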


\begin{proof}
As we know from the main result of \cite{IST}, the limit%
\begin{align}
&  \lim_{N\rightarrow\infty}-\frac{\ln\left(  \sum\limits_{\substack{\gamma
_{1}\subset V_{N}:\\\gamma_{1}:\left(  0,0\right)  \rightsquigarrow\left(
0,N\right)  }}\exp\left\{  -\beta\left\vert \gamma_{1}\right\vert
+\sum\limits_{\substack{C\subset V_{N}:\\C\cap\Delta_{\gamma_{1}}%
\neq\varnothing}}\Phi\left(  C\right)  \right\}  \right)  }{\beta N}%
\label{20}\\
&  =\tau_{\beta}\left(  \frac{\pi}{2}\right)  \, ,\nonumber
\end{align}
where $\tau_{\beta}\left(  \frac{\pi}{2}\right)  $ is the surface tension of
the Ising model. This is so despite the fact that the contour $\gamma_{1}$ is
confined to the right halfplane and despite the expression involving the
clusters $C$ entering $\left(  \ref{20}\right)  .$ Moreover, the relation
$\left(  \ref{20}\right)  $ would stay true even if we supplement the set of
allowed clusters by some extra clusters -- by $\bar{C}$'s, intersecting
$\Delta_{\gamma_{1}}$ but not confined to the right halfplane. These auxiliary
clusters $\bar{C}$ can have positive weights $\Phi\left(  \bar{C}\right)  ,$
thus introducing an extra attraction to the $Oy$ axis. Still, the relation
$\left(  \ref{20}\right)  $ holds, provided that%
\[
\left\vert \Phi\left(  \bar{C}\right)  \right\vert <\exp\left\{  -\chi{\beta
}(\mathrm{diam}_{\infty}(\bar{C})+1)\right\}  \text{ with }\chi>\frac{1}{2}.
\]
In particular, let us add to the set of clusters $\left\{  C\subset
V_{N}:C\cap\Delta_{\gamma_{1}}\neq\varnothing\right\}  $ extra clusters
$\bar{C},$ which are simply horizontal segments of length $h,$ which intersect
both the line $Oy$ and the contour $\gamma_{1}.$ Define their weight to be
$\exp\left\{  -{\beta}(h+1)\right\}  .$ According to \cite{IST}, we still have
the same limit:%

\begin{align*}
&  \lim_{N\rightarrow\infty}-\frac{1}{\beta N}\ln\left(  \sum
\limits_{\substack{\gamma_{1}\subset V_{N}:\\\gamma_{1}:\left(  0,0\right)
\rightsquigarrow\left(  0,N\right)  }}\right. \\
&  \left.  \exp\left\{  -\beta\left\vert \gamma_{1}\right\vert +\sum
\limits_{\substack{C\subset V_{N}:\\C\cap\Delta_{\gamma_{1}}\neq\varnothing
}}\Phi\left(  C\right)  +\exp\left\{  -{\beta}(h+1)\right\}  \left(
N-\left\vert Y_{N,h}\left(  \gamma_{1}\right)  \right\vert \right)  \right\}
\right) \\
&  =\tau_{\beta}\left(  \frac{\pi}{2}\right)  ,
\end{align*}
which proves $\left(  \ref{21}\right)  .$
\end{proof}

\section{Main theorems valid for all $T < T_{c}$}

\label{theorems2}

We define for each $T=\beta^{-1}>0$ and $\theta\in\lbrack-\pi/2,+\pi/2]$ a
collection $G_{\theta}^{T}$ of Gibbs states of the model~(\ref{01}) at
temperature~$T$ as follows. For $\theta\in(-\pi/2,+\pi/2)$ (resp., for
$\theta=\pm\pi/2$), $\left\langle \ast\right\rangle $ is in $G_{\theta}^{T}$
if (\ref{1.5}) is valid (resp., (\ref{17}) is valid for any $C\in
(-\infty,+\infty)$).

\begin{theorem}
\label{theorem6-1} Let $T \in(0,T_{c})$; then $G_{\theta}^{T}$ is nonempty for
every $\theta\in[-\pi/2, +\pi/2]$. States from $G_{\theta}^{T}$ and
$G_{\theta^{\prime}} ^{T}$ with $\theta\neq\theta^{\prime}$ are mutually singular.
\end{theorem}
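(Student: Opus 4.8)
The plan is to realize the states in $G^{T}_{\theta}$ as infinite-volume limits of the strip states $\left\langle \ast\right\rangle _{N}^{\theta}$ and then to deduce both assertions from a single geometric input: that the unique open contour $\gamma$ forced by the boundary data $\sigma^{\pm}$ (left) and $\sigma^{\theta}$ (right) stays, with probability tending to one, inside a narrow neighborhood of the straight segment joining $(0,0)$ to $(N,N\tan\theta)$, uniformly in $N$ and for \emph{every} $T<T_{c}$. First I would fix $\theta\in(-\pi/2,\pi/2)$ and pass to a weak-$\ast$ subsequential limit $\left\langle \ast\right\rangle ^{\theta}=\lim_{k}\left\langle \ast\right\rangle _{N_{k}}^{\theta}$, which exists by compactness of the set of states; for mere nonemptiness a subsequential limit suffices (uniqueness, i.e.\ the analogue of Theorem~\ref{t3}, would follow from the same cross-ratio bound as in (\ref{13}), now fed by exact-solution asymptotics rather than by the cluster expansion).

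The heart of the matter, and the step I expect to be the main obstacle, is to show that this limit satisfies (\ref{1.5}), hence lies in $G_{\theta}^{T}$. For this I would establish the analogue of the cigar-shaped localization of Section~\ref{lowtemp}, but valid all the way to $T_{c}$: using the exact asymptotics of the contour partition function of the form (\ref{14}), together with analyticity and strict convexity of the surface tension $\tau_{\beta}(\ast)$ for every $T<T_{c}$, one proves that $\gamma$ deviates transversally from the segment $((0,0),(N,N\tan\theta))$ by more than $N^{1/2+\varepsilon}$ with vanishing probability, uniformly in $N$. Granting this, fix points $(s,t)\to\infty$ with $\liminf(t/s)>\tan\theta$. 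Each such point sits a distance growing linearly in $|(s,t)|$ above the line of slope $\tan\theta$, hence eventually both far from the left boundary and far above $\gamma$. By FKG the half-plane state is dominated by the plus full-plane state, so $\left\langle \sigma_{(s,t)}\right\rangle ^{\theta}\le m^{\ast}(\beta)$; conditioning on the (overwhelmingly likely) event that $\gamma$ lies far below $(s,t)$, the conditional law dominates the all-plus law in a large window about $(s,t)$, forcing $\left\langle \sigma_{(s,t)}\right\rangle ^{\theta}\to +m^{\ast}(\beta)$. The symmetric argument gives $-m^{\ast}(\beta)$ below the interface, which is exactly (\ref{1.5}). The cases $\theta=\pm\pi/2$ differ only in that the confining segment becomes vertical; applying the same estimates along $\theta_{N}\to\pm\pi/2$ yields (\ref{17}) and hence nonemptiness of $G_{\pm\pi/2}^{T}$.

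For mutual singularity, take $\mu\in G_{\theta}^{T}$ and $\mu'\in G_{\theta'}^{T}$ with $\theta<\theta'$, and choose an intermediate angle $\theta''\in(\theta,\theta')$. Along the ray of slope $\tan\theta''$, property (\ref{1.5}) gives $\left\langle \sigma_{x}\right\rangle _{\mu}\to +m^{\ast}$ (since $\tan\theta''>\tan\theta$) while $\left\langle \sigma_{x}\right\rangle _{\mu'}\to -m^{\ast}$ (since $\tan\theta''<\tan\theta'$), with the endpoint cases handled identically via (\ref{17}). It remains to upgrade this difference of one-point functions into an almost-sure dichotomy. I would decompose each state into extremal Gibbs states and use the elementary bound $|\left\langle \sigma_{x}\right\rangle|\le m^{\ast}$: since the integral of $\left\langle \sigma_{x}\right\rangle$ over the extremal decomposition of $\mu$ tends to the extreme value $+m^{\ast}$ along the ray, the integrand must converge to $+m^{\ast}$ for almost every extremal component, and by an Aizenman--Higuchi argument \cite{A,H} applied to bulk windows such a component is locally the plus state there. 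The ergodic theorem then makes the block average converge almost surely, so the tail event
\[
A=\Big\{\lim_{R\to\infty}\tfrac{1}{|B_{R}(x_{R})|}\sum_{y\in B_{R}(x_{R})}\sigma_{y}=+m^{\ast}(\beta)\Big\},
\]
taken along the ray $\theta''$, has full measure under $\mu$ and zero measure under $\mu'$, exhibiting $\mu\perp\mu'$.

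In summary, the only genuinely new ingredient is the uniform-in-$N$, valid-up-to-$T_{c}$ interface localization of the second paragraph; everything else is soft (compactness, FKG monotonicity, the extremal decomposition, and full-plane Aizenman--Higuchi). I therefore expect the bulk of the work to lie in extracting this localization and the precise $-\tfrac12\ln N$ correction from the Ising exact solution, and in verifying that the analyticity of $\tau_{\beta}(\ast)$ persists up to the critical point so that the $N$-linear terms in the relevant cross-ratios of partition functions cancel.
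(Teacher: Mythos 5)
Your proposal has a genuine gap at exactly the step you yourself flag as ``the heart of the matter.'' You assume a cigar-shaped localization of the open contour at scale $N^{1/2+\varepsilon}$, uniformly in $N$, \emph{for every} $T<T_{c}$, and you propose to derive it from asymptotics of the form (\ref{14}) together with analyticity and strict convexity of $\tau_{\beta}(\ast)$. But (\ref{14}) is itself a cluster-expansion statement: in the paper it comes from relations (4.11.3), (4.12.3) of \cite{DKS}, and the contour weights (\ref{10}) on which that analysis rests are only controlled at low temperature. Near $T_{c}$ there is no convergent expansion, the open contour coexists with a dense gas of large closed contours, and establishing Ornstein--Zernike-type asymptotics of constrained interface partition functions (and, from them, transversal localization) up to $T_{c}$ is precisely the heavy analytic content that your argument leaves as a black box; it does not follow softly from convexity and analyticity of the surface tension. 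As written, the central step is therefore circular: you use (\ref{14})-type bounds to prove the very regime in which (\ref{14}) has not been established.

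The paper takes a different route that avoids contour localization altogether: it computes the magnetization profile $\langle\sigma(N_{1},m)\rangle$ directly by the spinor/fermionic exact solution (Section~\ref{belowcritical}) --- the partition function (\ref{D01}), saddle-point/steepest-descent analysis, and the Plemelj theorem --- arriving at the error-function profile (\ref{D22})--(\ref{D26}); property (\ref{1.5}) is then simply the $z\to\pm\infty$ limit of (\ref{D25}), which gives nonemptiness of $G_{\theta}^{T}$ for $\theta\in(-\pi/2,+\pi/2)$, while $\theta=\pm\pi/2$ and mutual singularity are deduced from the Aizenman--Higuchi theorem \cite{A,H}. Your third paragraph is in the same spirit as this last step (and usefully more explicit), but note two technical slips there: the bound $|\langle\sigma_{x}\rangle|\leq m^{\ast}$ is false for half-plane Gibbs states near the plus portion of the boundary (plus boundary conditions push the magnetization \emph{above} $m^{\ast}$; you only get the inequality asymptotically along rays going into the bulk), and the ergodic theorem cannot be invoked for the extremal components, which are not translation invariant --- one should use tail triviality of extremal Gibbs states instead.
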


The next theorem is analogous to Theorem~\ref{t3} for low $T$ given above,
except that now (a) we do not know that all subsequence limits agree and (b)
we treat $\theta= \pm\pi/2$ differently. The proofs of both theorems are then
given together.

\begin{theorem}
\label{theorem6-2} Let $T\in(0,T_{c})$ and let $\theta_{N}$ and $\left\langle
\ast\right\rangle _{N}^{\theta_{N}}$ be as in Theorem~\ref{t3}. Then for
$\theta\in(-\pi/2,+\pi/2)$, every subsequence limit of of $\left\langle
\ast\right\rangle _{N}^{\theta_{N}}$ belongs to $G_{\theta}^{T}$. Let
$\theta_{m}$ be a sequence in $(-\pi/2,+\pi/2)$ converging to $\pm\pi/2$ and
let $\left\langle \ast\right\rangle ^{\theta_{m}}\in G_{\theta_{m}}^{T}$. Then
any subsequence limit as $m\rightarrow\infty$ of $\left\langle \ast
\right\rangle ^{\theta_{m}}$ is in $G_{\pm\pi/2}^{T}$.
\end{theorem}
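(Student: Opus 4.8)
The plan is to establish both assertions by controlling the open contour $\gamma$ in the finite-volume states via the Ornstein--Zernike (OZ) estimates that are available for \emph{all} $T<T_{c}$, not merely at low temperature. Recall that each configuration drawn from $\left\langle \ast\right\rangle_{N}^{\theta_{N}}$ has exactly one open contour $\gamma$ running from $(0,0)$ to $(N,N\tan\theta_{N})$ (or, for the $\pm\pi/2$ boundary, to $(0,\pm N)$). The two things I need are: (i) a localization statement saying that, conditioned on a finite window near the origin, $\gamma$ passes through that window along a nearly straight segment whose asymptotic slope is governed by $\theta_{N}$; and (ii) a surface-tension/partition-function asymptotic of the form $(\ref{14})$ valid throughout $(0,T_{c})$, so that the relevant cross-ratios stay bounded and the event $\gamma\subset U_{N,d,\varkappa,R}$ has probability tending to $1$ uniformly in $N$ as $R\to\infty$. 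Both are exactly the inputs supplied by the rigorous OZ theory (the cited \cite{CIV} and the surface-tension analyticity in $\theta$), which is precisely what extends the low-temperature arguments of Section~\ref{lowtemp} to the full subcritical regime.

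\smallskip
First I would treat $\theta\in(-\pi/2,+\pi/2)$. Fix such a $\theta$, take any subsequence along which $\left\langle \ast\right\rangle_{N}^{\theta_{N}}$ converges weakly to some state $\left\langle \ast\right\rangle$, and show $\left\langle \ast\right\rangle\in G_{\theta}^{T}$ by verifying $(\ref{1.5})$. The mechanism is the cigar-confinement estimate: since $\theta_{N}\to\theta$, the typical contour is confined, with probability $\to1$ uniformly in $N$, to the set $U_{N,d,\varkappa,R}$ built around the segment of slope $\tan\theta_{N}$, and this confinement width is $o(N)$ in the bulk. Hence for a site $(s,t)$ with $t/s$ bounded strictly above (resp. below) $\tan\theta$, the contour passes, with overwhelming probability, on the minus (resp. plus) side of $(s,t)$ once $s,t$ are large; by the OZ fluctuation scale the spin expectation at $(s,t)$ therefore relaxes to $\mp m^{\ast}(\beta)$ (resp. $\pm m^{\ast}(\beta)$) as in $(\ref{1.5})$. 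This passes to the weak limit, giving $\left\langle \ast\right\rangle\in G_{\theta}^{T}$. The mutual singularity of $G_{\theta}^{T}$ and $G_{\theta'}^{T}$ for $\theta\neq\theta'$, claimed in Theorem~\ref{theorem6-1}, follows because $(\ref{1.5})$ for distinct angles prescribes opposite limiting signs along any ray with slope strictly between $\tan\theta$ and $\tan\theta'$, an event of asymptotic probability zero under one state and one under the other. Nonemptiness of $G_{\theta}^{T}$ for $\theta\in(-\pi/2,+\pi/2)$ is then immediate, since any subsequence limit (which exists by compactness of the space of Gibbs states) lands in $G_{\theta}^{T}$.

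\smallskip
Next I would handle $\theta=\pm\pi/2$, and this is where the subtlety lies: here the interface runs asymptotically parallel to the boundary $Oy$, so bulk confinement alone does not pin down the limiting signs. Take $\theta_{m}\to\pm\pi/2$ and states $\left\langle \ast\right\rangle^{\theta_{m}}\in G_{\theta_{m}}^{T}$, and let $\left\langle \ast\right\rangle$ be any subsequence limit. To show $\left\langle \ast\right\rangle\in G_{\pm\pi/2}^{T}$ I must verify $(\ref{17})$: for every fixed $C\in(-\infty,+\infty)$ and $s,t\to\infty$ with $t/s\to C$, the expectation tends to $\mp m^{\ast}(\beta)$. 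The point is that for each \emph{fixed} $C$, the ray of slope $C$ lies strictly on the minus side (for $\theta=+\pi/2$) of every sufficiently close approximating angle $\theta_{m}$; applying the $\theta_{m}$-version of $(\ref{1.5})$ and then sending $m\to\infty$ after the $s,t$ limit gives the required value. The delicate part is the order of limits: I need the convergence in $(\ref{1.5})$ for $\left\langle \ast\right\rangle^{\theta_{m}}$ to be controlled uniformly enough in $m$ that the diagonal passage to $\left\langle \ast\right\rangle$ is legitimate, which is exactly why the uniform-in-$N$ (and hence stable-under-$m$) confinement estimate from the OZ theory is essential.

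\smallskip
The main obstacle, as just indicated, is the $\theta=\pm\pi/2$ case and specifically the interchange of the $s,t\to\infty$ limit with the $m\to\infty$ (subsequence) limit. Unlike the interior angles, where the interface is a macroscopic distance from the boundary and bulk OZ estimates suffice, here one must rule out that the limiting interface sticks to the wall $Oy$ and produces the wrong sign along rays of slope $C$. Controlling this requires the sharper wall-repulsion input of \cite{IST} (the entropic repulsion beating attraction when $\chi>\tfrac12$), which guarantees that $\gamma$ fluctuates away from $Oy$ on a scale growing faster than any fixed distance; combined with Proposition~\ref{t2}'s mechanism, this forces the correct limiting sign $\mp m^{\ast}(\beta)$ along every fixed-slope ray. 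I expect this uniform wall-repulsion control, valid up to $T_{c}$, to be the crux of the argument.
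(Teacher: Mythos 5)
Your proposal takes a genuinely different route from the paper's, and it contains a genuine gap: the technical inputs you invoke are not actually available on all of $(0,T_{c})$ from the sources you cite. The cigar-confinement estimate and the partition-function asymptotics (\ref{14}) come from \cite{DKS}, whose analysis --- like the contour-weight formula (\ref{10}) itself --- rests on a convergent cluster expansion and is therefore intrinsically a low-temperature statement; \cite{CIV} establishes Ornstein--Zernike behavior for two-point functions \emph{above} $T_{c}$, and transferring it to a Dobrushin interface confined to a half-plane with a wall is precisely the work that is not done there; and, most seriously, the wall-repulsion theorem of \cite{IST}, which you yourself identify as the crux of your $\theta=\pm\pi/2$ argument, is explicitly a \emph{low-temperature} result (again polymer/cluster-expansion based), and no analogue valid up to $T_{c}$ exists for you to cite. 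So at exactly the decisive points your proof asserts, rather than proves, the extension beyond low temperature --- but that extension is the entire content of the theorem.

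The paper circumvents these obstructions by a completely different mechanism, and it is worth seeing how. For $\theta\in(-\pi/2,+\pi/2)$ it realizes the strip states $\left\langle \ast\right\rangle _{N}^{\theta_{N}}$ via a cylinder construction and then uses the exact spinor/saddle-point computations of Section~\ref{belowcritical}: Equations (\ref{D22})--(\ref{D26}) give the magnetization profile across the interface at \emph{every} $T<T_{c}$, and sending the scaling variable $z\to\pm\infty$ in (\ref{D24})--(\ref{D25}) yields (\ref{1.5}) for every subsequence limit; no confinement estimate is needed. For the remaining claims --- including your problematic $\theta_{m}\to\pm\pi/2$ case --- the paper uses no wall repulsion at all: notice that Theorem~\ref{theorem6-2} deliberately approaches $\pm\pi/2$ through half-plane states at \emph{interior} angles rather than through strip states with vertical boundary data, precisely so that the interface never hugs the wall. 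The interchange-of-limits problem you correctly identify is then settled by the Aizenman--Higuchi theorem \cite{A,H}: viewed far from the boundary along any ray of fixed slope $C$, a subsequence limit can only look like a mixture of the plus and minus full-plane states, and the sign information inherited from (\ref{1.5}) at the angles $\theta_{m}$ identifies the correct pure phase, giving (\ref{17}). If you wish to salvage your approach, the missing ingredient is a version of the \cite{DKS} and (especially) \cite{IST} estimates valid on all of $(0,T_{c})$; for the wall interaction this is not a citation away but a substantial unsolved problem, which is exactly why the paper resorts to exact solvability plus \cite{A,H} instead.
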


\begin{proof}
The proof of Theorems~\ref{theorem6-1} and~\ref{theorem6-2}
%$\theta\in(-\pi/2, +\pi/2)$
is based on the planar Ising model exact calculations for the profile of an
interface at angle $\theta$ given in the next section of the paper; earlier
exact calculations on interface profiles may be found in~\cite{AR,FFW,AU}. For
those calculations, we note that one may obtain the states $\left\langle
\ast\right\rangle _{N}^{\theta_{N}}$ in the strip that is infinite in the
vertical $t$-coordinate (and of width $N$ in the horizontal $s$-coordinate) by
the following procedure. First take the finite region that is periodic in the
$t$-coordinate with large but finite period~$K$ and choose boundary conditions
on the left and right boundaries (at $s=0$ and $s=N$) which require two
interfaces
%--- starting at $s=0$ from $t=0$ and $t=-L$ and ending
%at $s=M$ with $t$ approximately $N \tan{\theta}$ and $t=-L^{\prime}$. Then
as follows: all spins are $+1$ at both the left and right boundaries
\textit{except\/} for $-1$ spins when $t$ is between a large negative value
$M$ and~$0$ (resp., between $M$ and $N\tan{\theta}$) on the left (resp., the
right) boundary. Then take the limit where first $K\rightarrow\infty$ and
then
%$L,L^{\prime}\to\infty$ so
$M\rightarrow-\infty$ so that the
%$(L, L^{\prime})$-interface is eliminated, resulting in the
$M$-interface is eliminated, resulting in the $\left\langle \ast\right\rangle
_{N}^{\theta_{N}}$ state on the
%$\infty\times M$ strip.
$N\times\infty$ strip.

Now the results of the next section, in particular Equations (\ref{D22}%
)--(\ref{D26}) give an exact formula for the extension of the limit in
(\ref{1.5}) when $t-s \tan{\theta}$ is proportional to $\sqrt{s}$. When $(t-s
\tan{\theta})/\sqrt{s} \to\pm\infty$, it corresponds to $z = \pm\infty$ in
(\ref{D24})--(\ref{D25}). This proves the first parts of
Theorems~\ref{theorem6-1} and~\ref{theorem6-2} for $\theta\in(-\pi/2, +\pi
/2)$. The rest of the claims of the theorems then follow from the well-known
result about the full plane Ising model for $T<T_{c}$ that the only pure
states are the plus and minus ones~\cite{A,H}.
\end{proof}

\section{Exact results for all $T < T_{c}$}

\label{belowcritical}

%\section*{An interface profile problem}

It turns out that Dobrushin boundary conditions in the half-plane Ising
ferromagnet has a useful realization in the spinor language of Kaufman
\cite{K} and of Schultz, Mattis and Lieb \cite{SML}. To have translational
symmetry at one's disposal, which is crucial, it is necessary to wrap the
lattice on a cylinder; thus with plus boundary conditions, say, it is
necessary to induce two domain walls crossing the cylinder. We take the
diameter of the cylinder to infinity and then follow that by imposing infinite
separation between the two domain walls. We then focus on one of these domain
walls. We select a Dobrushin boundary condition, one that forces the interface
to cross at a given mean angle; then we investigate the behavior of the
expectation of the magnetization near one face of the cylinder.
\begin{figure}[ptbh]
\centering
\includegraphics[width=10cm]{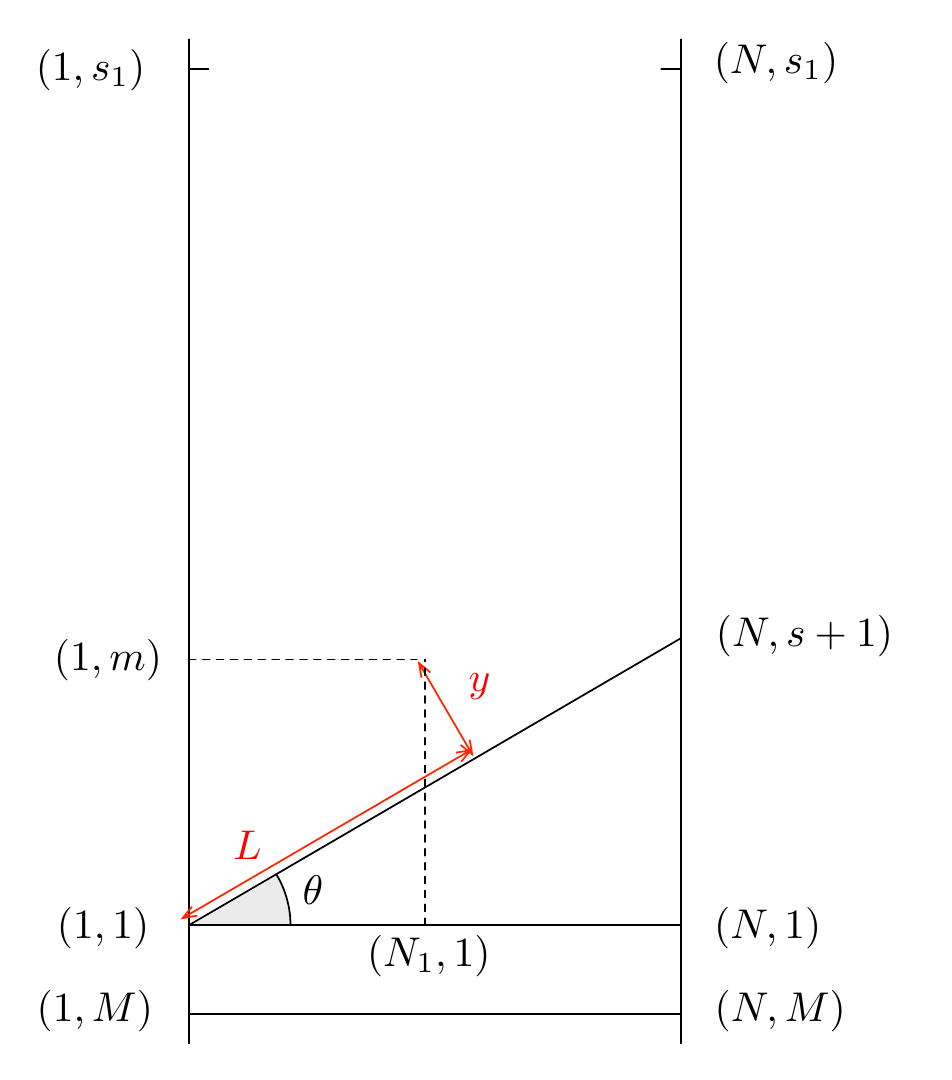}\caption{The diagonal interface and the
definitions used. Note that $m=[L\sin\theta+y\cos\theta]$, $N_{1}=[L\cos
\theta-y\sin\theta]$, $[\cdots]$ denotes the integer part.}%
\label{fig1}%
\end{figure}

The arrangement is shown in Fig. \ref{fig1}, the caption of which explains the
spatial coordinates. The partition function for such a domain wall, beginning
at $y=1$ and ending at, $y=s+1$ is given by
\begin{equation}
Z^{\times}=\frac{1}{2\pi}\int_{-\pi}^{\pi}\text{d}\omega\,\exp\Bigl[-N\gamma
(\omega)+is\omega\Bigr]|g(\omega)|^{2}\,.\label{D01}%
\end{equation}
Here, the function $\gamma(\omega)$ is one of the hyperbolic triangle elements
of Onsager~\cite{O}, given by
\begin{equation}
\cosh\gamma(\omega)=\cosh2K_{1}^{\star}\cosh2K_{2}-\sinh2K_{1}^{\star}%
\sinh2K_{2}\cos\omega\,,\,\,\gamma(\omega)\geqslant0\,,\omega\in
\mathbb{R}\,.\label{D02}%
\end{equation}
The interactions are in units of $k_{B}T$ and are taken different in the two
directions, something mathematically useful but physically irrelevant. The
variables $K_{j}^{\star}$, $j=1,2$ are dual ones given by
\begin{equation}
\exp\left(  2K_{j}^{\star}\right)  =\coth K_{j}\,.\label{D03}%
\end{equation}
The function $g$ has the same domain of analyticity as $\gamma$, a matter of
some significance, which we merely quote for the next step: we are interested
in an interface crossing at a given angle, say $\theta$, so we take
$N\rightarrow\infty$, $s=[N\tan\theta]$. The analytical tool here is saddle
point integration. The saddle point in $-\pi\leqslant\text{Re}(\omega
)\leqslant\pi$ is at $\omega_{s}=i\nu(\theta)$ where
\begin{equation}
\gamma^{(1)}\left(  i\nu(\theta)\right)  =i\tan\theta\,.\label{D04}%
\end{equation}
It is straightforward to see that there is a unique solution for
$0\leqslant\theta<\pi/2$ in the interval $0\leqslant\nu(\theta)<2|K_{1}%
-K_{2}^{\star}|$. The asymptotics of (\ref{D01}) are given by
\begin{equation}
Z^{\times}\sim|g\left(  i\nu(\theta)\right)  |^{2}\exp\Bigl[-L\tau
(\theta)\Bigr]\frac{1}{2\pi}\int_{-\infty}^{+\infty}\text{d}u\,\exp
\Bigl[-u^{2}L\cos\theta\gamma^{(2)}\left(  i\nu(\theta)\right)
/2\Bigr]\,.\label{D05}%
\end{equation}
The length $L$ is given by $N=[L\cos\theta]$ and the surface tension is
\begin{equation}
\tau(\theta)=\cos\theta\,\gamma\left(  i\nu(\theta)\right)  +\sin\theta
\,\nu(\theta)\,.\label{D06}%
\end{equation}
This is an exact derivation of the angle-dependent surface tension for
Dobrushin boundary conditions. The finite-size corrections are easily obtained
by integration in (\ref{D05}):
\begin{equation}
Z^{\times}\sim|g\left(  i\nu(\theta)\right)  |^{2}\exp\Bigl[-L\tau
(\theta)\Bigr]\Bigl[2\pi L\cos\theta\,\gamma^{(2)}\left(  i\nu(\theta)\right)
\Bigr]^{-1/2}\,.\label{D07}%
\end{equation}
We have just given the leading term, as soon we shall be interested only in
limiting behavior: (\ref{D05}) is equivalent to the first term of a Laplace
method. Some comments are in order: first, we have to get onto the steepest
descent path which actually goes monotone-upwards in the $\omega$ plane,
intersecting the line $\nu=\infty$, $\omega=u+iv$ at $-\pi<u<0$ to the left
and symmetrically at $0<u<\pi$ to the right. We have evaluated the function
$g$ and assure the reader that we do not have to cross a singularity of it to
get onto the steepest descent path. A good contemporary source on steepest
descent methods is Ablowitz and Fokas \cite{AF}. Another point, which must be
outlined here, is that the Fermionic realisation of the Dobrushin boundary
allows emission of more than one Fermion. These terms can be controlled; they
do not report in the limiting behavior given in (\ref{D01}), nor do they in
the magnetization profile, which we are about to specify. Consider the
magnetization at a position $(N_{1},m)$ in the Dobrushin boundary condition
used above. It is known that the leading term, when $M\rightarrow\infty$ and
when the second interface has been clustered away, is
\begin{equation}
\langle\sigma(N_{1},m)\rangle=\mathfrak{m}^{\ast}\left(  Z^{\times}\right)
^{-1}Y(N,N_{1},m,s)\,.\label{D08}%
\end{equation}
Here, $\mathfrak{m}^{\ast}$ is the spontaneous magnetization and $Y$ has the
integral representation:
\begin{align}
Y &  =\frac{1}{2\pi}\int_{-\pi}^{\pi}\text{d}\omega_{1}\,g(\omega
_{1})\,\text{e}^{-N_{1}\gamma(\omega_{1})}\label{D09}\\
\times &  \frac{\mathcal{P}}{2\pi}\int_{-\pi}^{\pi}\text{d}\omega
_{2}\,\text{e}^{-(N-N_{1})\gamma(\omega_{2})}\,\frac{j(\omega_{1},\omega_{2}%
)}{\text{e}^{i(\omega_{1}+\omega_{2})}-1}\text{e}^{-im(\omega_{1}+\omega
_{2})+is\omega_{2}}g^{\ast}(-\omega_{2})\,,\nonumber
\end{align}
in which $\ast$ denotes complex conjugation. This is what is left after
multiple domain wall configurations have been eliminated. The function
$j(\omega_{1},\omega_{2})$ is $2\pi$-periodic in both variables and has in
each variable the same domain of analyticity as $\sinh\gamma$, that is, square
root branch cuts at $\omega=\pm2i(K_{1}\pm K_{2}^{\star})$. Further, we have
$j(\omega,-\omega)=2$. We are interested in the limit as $N\rightarrow\infty$,
$s=[N\tan\theta]$, $0\leqslant\theta<\pi/2$; here, $[\cdots]$ denotes nearest
integer. We apply the Plemelj theorem, bearing in mind that we want to get
onto the steepest descent path for the $\omega_{2}$ integral; this is in the
upper half plane. We find that
\begin{equation}
\left(  Z^{\times}\right)  ^{-1}Y(N,N_{1},m,s)\sim1+\left(  Z^{\times}\right)
^{-1}X\,,\label{D10}%
\end{equation}
where
\begin{equation}
X=\frac{1}{2\pi}\int_{-\pi}^{\pi}\text{d}\omega_{1}\,g(\omega_{1}%
)\,\text{e}^{-N_{1}\gamma(\omega_{1})-im\omega_{1}}W(N,s,m|\omega
_{1})\,,\label{D11}%
\end{equation}
finally, we have:
\begin{equation}
W=\frac{1}{2\pi}\int_{\mathcal{C}}\text{d}\omega_{2}\,\text{e}^{-(N-N_{1}%
)\gamma(\omega_{2})}\text{e}^{i(s-m)\omega_{2}}\frac{j(\omega_{1},\omega_{2}%
)}{\text{e}^{i(\omega_{1}+\omega_{2})}-1}g^{\ast}(-\omega_{2})\,,\label{D12}%
\end{equation}
where $\mathcal{C}$ is exactly the steepest descent path used in the partition
function investigation above. We now evaluate the limit:
\begin{equation}
\lim\left(  Z^{\times}\right)  ^{-1}W=g^{\ast}\left(  -i\nu(\theta)\right)
\text{e}^{N_{1}\gamma\left(  i\nu(\theta)\right)  }\text{e}^{m\nu(\theta
)}\frac{j(\omega_{1},i\nu(\theta))}{\text{e}^{i\omega_{1}}\text{e}%
^{-\nu(\theta)}-1}\,,\label{D13}%
\end{equation}
thus we have the limiting result that $\left(  Z^{\times}\right)  ^{-1}$
converges to
%\[%
\begin{equation}
\frac{1}{2\pi}\int_{-\pi}^{\pi}\text{d}\omega_{1}\,\text{e}^{-N_{1}%
\Bigl[\gamma(\omega_{1})-\gamma\left(  i\nu(\theta)\right)  \Bigr]}%
\text{e}^{-i\left(  \omega_{1}+i\nu(\theta)\right)  }\frac{j(\omega_{1}%
,i\nu(\theta))}{\text{e}^{i\omega_{1}}\text{e}^{-\nu(\theta)}-1}\frac
{g(\omega_{1})}{g\left(  i\nu(\theta)\right)  }\,.\label{D14}%
\end{equation}
We now investigate this by saddle-point integration: care is needed. The
integrand in (\ref{D14}) has a simple pole at $\omega_{1}=-i\nu(\theta)$. The
saddle point is given by
\begin{equation}
N_{1}\gamma^{(1)}\left(  i\nu_{s}(1)\right)  =-im\,.\label{D15}%
\end{equation}
It is natural to express the result in terms of the Euclidean distance along
the \textquotedblleft flattened\textquotedblright\ interface pointing in the
direction given by $\theta$ and the normal coordinate $y$. Thus we have
\begin{align}
N_{1} &  =L\cos\theta-y\sin\theta\,,\nonumber\label{D16}\\
m &  =L\sin\theta+y\cos\theta\,,
\end{align}
equation (\ref{D15}) becomes
\begin{equation}
\gamma^{(1)}\left(  i\nu_{s}(1)\right)  =-i\frac{L\sin\theta+y\cos\theta
}{L\cos\theta-y\sin\theta}\,.\label{D17}%
\end{equation}
Referring back to (\ref{D04}), we see that
\begin{equation}
\gamma^{(1)}\left(  i\nu_{s}(1)\right)  -\gamma^{(1)}\left(  i\nu
(\theta)\right)  =-i\frac{y}{L\cos^{2}\theta}\left(  1-\frac{y}{L}\tan
\theta\right)  ^{-1}\,.\label{D18}%
\end{equation}
We are interested in the case $L$ large and $\alpha\neq0$ where
\begin{equation}
y=\alpha L^{1/2}\,,\label{D19}%
\end{equation}
this means we can use a Taylor-series approximation on the left hand side of
(\ref{D18})
\begin{equation}
\nu_{s}(1)=-\nu(\theta)-\frac{\alpha}{L^{1/2}}\frac{1}{\cos^{2}\theta
\,\gamma^{(2)}\left(  i\nu(\theta)\right)  }+\mathcal{O}\left(  \frac{1}%
{L}\right)  \,.\label{D20}%
\end{equation}
Working up (\ref{D14}) with (\ref{D20}) must take careful account of the
simple pole and whether it is crossed in getting onto the steepest descents
contour. We encounter the following integral representation:
\begin{equation}
F(z)=\frac{1}{2\pi}\int_{-\infty}^{+\infty}\text{d}u\,\frac{\text{e}^{-u^{2}}%
}{z+iu}\,,\label{D21}%
\end{equation}
it is easy to see that $F(-z)=-F(z)$ and also to derive the identity
\begin{equation}
F(z)=\frac{2}{\sqrt{\pi}}\int_{z}^{+\infty}\text{d}u\,\text{e}^{-u^{2}%
}\,,\qquad z>0\,.\label{D22}%
\end{equation}
This applies to the development of (\ref{D14}) by steepest descents with
\begin{equation}
z=\frac{\alpha(\sec\theta)^{3/2}}{2\gamma^{(2)}\left(  i\nu(\theta)\right)
}\,.\label{D23}%
\end{equation}
With the definition
\begin{equation}
G(z)=\frac{2}{\sqrt{\pi}}\int_{0}^{z}\text{d}u\,\text{e}^{-u^{2}}\,,\qquad
z>0\,,\label{D24}%
\end{equation}
we have
\begin{equation}
s\lim\langle\sigma(N_{1},m)\rangle=-\mathfrak{m}^{\ast}\text{sgn}%
(z)G(|z|)\,,\label{D25}%
\end{equation}
where the rather complex limiting procedure $s\lim$ is specified in the text.
As a final remark, note that a full development of the saddle point equations
and the surface tension function (\ref{D06}) allow (\ref{D23}) to be expressed
in terms of thermodynamic quantities
\begin{equation}
z=\alpha\Bigl[\sec\theta\left(  \tau(\theta)+\tau^{(2)}(\theta)\right)
\Bigr]^{1/2}\,.\label{D26}%
\end{equation}
This combination of surface tension and derivatives is known as the surface
stiffness; it is interesting that such a meso-scale quantity, representing a
contraction of description, occurs here.

%\label{belowcritical}

\end{document}